\setlist[itemize]{leftmargin=*}
\setlist[enumerate]{leftmargin=*}
\setlist[description]{leftmargin=1em}
\newcommand{\define}[1]{\textit{#1}}
\newcommand{\Crown}[1]{\mathfrak{C}{#1}}
\crefname{algocf}{alg.}{algs.}
\Crefname{algocf}{Algorithm}{Algorithms}
\newtheorem{theorem}{Theorem}
\newtheorem{lemma}[theorem]{Lemma}
\newtheorem{corollary}[theorem]{Corollary}
\theoremstyle{definition}
\newtheorem{definition}[theorem]{Definition}
\theoremstyle{remark}
\newtheorem{remark}[theorem]{Remark}
\newtheorem{example}[theorem]{Example}
\newcommand{\addtikzpadding}{\useasboundingbox ([shift={(0mm,1mm)}]current bounding box.north east) rectangle ([shift={(0mm,0mm)}]current bounding box.south west)}
\newcommand\blfootnote[1]{%
  \begingroup
  \renewcommand\thefootnote{}\footnote{#1}%
  \addtocounter{footnote}{-1}%
  \endgroup
}
\title[A distributed blossom algorithm for minimum-weight perfect matching]{A distributed blossom algorithm\\for minimum-weight perfect matching}
\author[Eric C.\ Peterson]{Eric C.\ Peterson${}^*$}
\affiliation{
\institution{Eigenware}
\city{Berkeley}
\state{CA} 
\country{USA}}
\email{peterson.eric.c@gmail.com}
\author[Peter J.\ Karalekas]{Peter J.\ Karalekas${}^*$}
\affiliation{
\institution{Eigenware}
\city{Berkeley}
\state{CA} 
\country{USA}}
\email{peter@karalekas.com}
\begin{abstract}
We describe a distributed, asynchronous variant of Edmonds's exact algorithm for producing perfect matchings of minimum weight~\cite{EdmondsMaximumMatching01}.
The development of this algorithm is driven by an application to online error correction in quantum computing, first envisioned by Fowler~\cite{FWH}; we analyze the performance of our algorithm as applied to this domain in a sequel~\cite{PetersonKaralekasArtemisPaper}.
\end{abstract}
\begin{document}

\maketitle
\section{Introduction}

% Two aims of distributed computing are to accelerate computation by performing several independent computational steps in parallel, and to alleviate the resource requirements of an individual computational actor by dividing that responsibility among several others.
% These two factors are in contention: the more state shared by actors (or governed by a particular actor), the more difficult it is to parallelize operations; and the less state shared by actors, the narrower the space of possible computations.
% Distributed algorithm design studies how these two poles can be balanced and whether there are any gains to be made in the exchange.

Edmonds's now-classic results on maximum matchings lie at the intersection of computer science, combinatorics, and integer linear programming: starting from a known polynomial-time algorithm for producing maximum matchings in bipartite graphs~\cite[Proposition 5.7]{CCPS}, he showed first that a polynomial-time modification could be used to handle a non-bipartite graph~\cite{EdmondsPathsTreesFlowers}, then that in the presence of edge weights another polynomial-time modification could be used to produce a minimum-weight representative among maximum matchings~\cite{EdmondsMaximumMatching01}.
These are respectively called the ``blossom algorithm'' and the ``weighted blossom algorithm''.
These landmark results set in motion broad research programs in several domains: there are theoretical consequences in both computer science and mathematics; the algorithmic technique itself admits both generalizations and efficiency improvements; and it opened the door to a host of applications.

As an example of such an application, minimum-weight perfect matchings (MWPMs) attracted the attention of quantum computer scientists, who showed that an MWPM solver can be used as an approximation algorithm for decoding syndromes appearing in quantum error correction, with approximation ratio dependent on the physical properties of the underlying quantum device~\cite{DKLP}.
This idea has taken such hold with designers of quantum computers that it has appeared in a variety of surveys on the subject (see, e.g.,~\cite{FMM,CNAACHIPBFKRPJSNPB}) as a solved problem.
However, in order to deploy this on a live quantum device, Fowler showed that one must make use of a ``parallelized'' MWPM solver~\cite{FWH}, and work has stopped short of producing (or referencing) such an algorithm.
\blfootnote{${}^*$ All work was done prior to joining Amazon Web Services.}

Careful consideration of the intended application indicates that the ``parallelized'' implementation must actually be \emph{distributed} with only local information available to each worker, \emph{online} so as to cope with a dynamic problem graph, and ideally \emph{asynchronous} to best match lab hardware.
Meanwhile, though state of the art in MWPM solvers has advanced substantially since the '60s, they have had other concerns top of mind: it is easy to show that the worst-case complexity of an exact solution to the matching problem on a cycle graph has runtime polynomial in the diameter~\cite{Linial}, which has encouraged the development of approximation algorithms instead~(\cite{WattenhoferWattenhofer}, \cite{LPSR}, \cite{LPSP}, and many others).

In this paper, driven by the extra structure available on the problem graphs in our intended application, we return to the exact setting: we describe an exact, asynchronous distributed blossom algorithm suitable for fulfilling Fowler's claim and prove its correctness.
As part of extending Edmonds's algorithm to operate on alternating forests rather than trees, we draw the reader's attention to a new, naturally-occurring forest operation which we call \define{multireweight}, which does not arise during serial execution and which is crucial to the correctness of the distributed algorithm.
We also provide an implementation of the algorithm, \texttt{anatevka}~\cite{anatevka,Aleichem}, as part of a simulation testbed for distributed systems described in a previous paper~\cite{PetersonKaralekasAetherPaper,aether}.
We make no reference to quantum computing outside of this introduction, since the existence and behavior of this algorithm is entirely a matter of distributed computing.
Instead, we direct interested readers to the sequel paper~\cite{PetersonKaralekasArtemisPaper} for the further modifications necessary to the application and the performance analysis in that context.

% \todo[inline]{%
% Peter suggested that we make contact with other modern literature on distributed matchers, out of which he only found approximation algorithms.
% Eric thinks that should go either here, in the introduction, or in the Closing Comments section.
% }

% \todo[inline]{%
% \textbf{Old todo from Peter:} when introducing the blossom algorithm, we might consider trying to reference some other practical use cases beyond QC -- in doing a brief search, it looks like there are potentially VLSI routing applications
% }

\section{The serial algorithm}\label{ClassicalSection}

Our distributed algorithm is most easily cast as a piecewise modification of Edmonds's serial blossom algorithm, with one extra operation.
To facilitate such a description, and to put the unfamiliar reader at ease, we first review the details of the serial algorithm.
The inputs and output of the problem are:

\begin{definition}
A \define{matching} $M$ on a graph $G$ is a set of edges with no repeated vertices.
A matching is \define{maximum} when it is of maximum cardinality.
A maximum matching on $G$ is \define{perfect} when $G$ has an even number of vertices.
For edge-weighted $G$, a matching is said to be \define{minimum weight} if there is no equal-sized matching with smaller edge weight sum.
\end{definition}

\noindent
The goal is to produce minimum-weight perfect matchings.

\begin{remark}[Standing assumptions on $G$]
Initially, we will assume $G$ to be unweighted and bipartite, though we will drop these assumptions as our discussion progresses.
Between any pair of vertices in $G$, we permit there to be no edges, one edge, or several edges---but since a loop can never be a match edge, it is harmless to assume that $G$ is loopless.
For the purposes of our description, it is convenient to permit the case of multiple edges, but it is not necessary: the algorithm will behave as if there is at most one edge between any pair of vertices (viz., the one of least weight).
In the weighted setting, one can also model missing edges by edges of infinite weight, so that the entire algorithm need only be described for a complete graph.
\end{remark}

\subsection{Augment and graft}

\noindent
Suppose that we are given a matching $M$, perhaps not yet maximum.
The core mechanism of the algorithm is to identify an augmenting path:

\begin{definition}
We say that an edge is \define{matched} if it belongs to $M$ or otherwise that it is \define{unmatched}, and we say that a vertex is \define{matched} if it is the endpoint of any matched edge or otherwise that it is unmatched.
Thus, an \define{alternating chain} or \define{alternating path} is a sequence of adjoining edges which alternate between being matched and unmatched.
An \define{augmenting path} is an alternating path whose first and last vertices are both unmatched.
\end{definition}

\noindent
With such a path in hand, one can produce a new matching by inverting which edges in the path belong to the matching. The number of edges in the new matching is one larger than that of the old.

\begin{definition}
This inversion procedure is called \define{augmenting} $M$ along the augmenting path.

\end{definition}

\begin{example}
See \Cref{AugmentPathFigure} for a depiction of augmentation.
\end{example}

\begin{figure}
\begin{tikzpicture}[main/.style = {draw, circle}, node distance=12mm]
\node[main] (r) {$r$}; 
\node[main] (s) [right of=r] {$s$};
\node[main] (t) [right of=s] {$t$};
\node[main] (u) [right of=t] {$u$};
\node[main] (v) [right of=u] {$v$};
\node[main] (w) [right of=v] {$w$};
\draw (r) -- (s);
\draw[very thick] (s) -- (t);
\draw (t) -- (u);
\draw[very thick] (u) -- (v);
\draw (v) -- (w);
\addtikzpadding;
\end{tikzpicture}
\[\Downarrow\]
\begin{tikzpicture}[main/.style = {draw, circle}, node distance=12mm]
\node[main] (r) {$r$}; 
\node[main] (s) [right of=r] {$s$};
\node[main] (t) [right of=s] {$t$};
\node[main] (u) [right of=t] {$u$};
\node[main] (v) [right of=u] {$v$};
\node[main] (w) [right of=v] {$w$};
\draw[very thick] (r) -- (s);
\draw (s) -- (t);
\draw[very thick] (t) -- (u);
\draw (u) -- (v);
\draw[very thick] (v) -- (w);
\end{tikzpicture}
\caption{%
The graph on top illustrates an augmenting path joining $r$ to $w$: neither $r$ nor $w$ is matched, and the edges between them alternate between not belonging and belonging to the matching.
The graph below shows the effect of augmenting along this path: whether an edge is or is not a member of the matching reverses, and the size of the matching increases by one edge.
}
\label{AugmentPathFigure}
\end{figure}
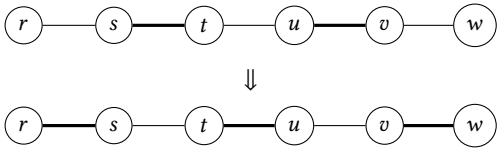

\noindent
The meat of the algorithm is then a search for augmenting paths, which begin and end at unmatched vertices.
The data structure which powers this is an \define{alternating tree}:

\begin{definition}
An \define{alternating tree} is a tree whose root is unmatched, and whose edges alternate between unmatched and matched as they descend from the root.
We refer to the even- and odd-depth tree vertices respectively as \define{positive} and \define{negative}, and we write $T_+$ and $T_-$ for these subsets of vertices.%
\footnote{%
Some authors refer to positive, negative, and unmatched vertices respectively as \define{outer}, \define{inner}, and \define{exposed}.
}
\end{definition}

\begin{definition}
The inductive operation used to assemble such an alternating tree is called \define{grafting}.\footnote{%
Some authors call this operation \define{grow}~\cite{Kolmogorov}.
}
Let $M$ be an intermediate matching, and let $T$ be an alternating subtree of the ambient graph $G$.
Select a pair of edges $e$ and $f$, as in
\begin{center}
\begin{tikzpicture}[main/.style = {draw, circle}, node distance=15mm]
\node[main] (u) {$u$};
\node[main] (v) [right of=u] {$v$};
\node[main] (w) [right of=v] {$w$};
\draw[densely dotted] (u) -- (v) node[midway, fill=white]{$e$};
\draw[very thick] (v) -- (w) node[midway, fill=white]{$f$};
\addtikzpadding;
\end{tikzpicture}
,
\end{center}
with the additional properties that
\begin{enumerate}
    \item $u$ belongs to $T$, but $v$ and $w$ do not.
    \item If $u$ has a parent in $T$, the edge to that parent is in $M$.
    \item $f$ belongs to $M$, but $e$ does not.
\end{enumerate}
We then define the \define{graft} of this edge pair onto $T$ to be the union $T' = T \cup \{e, f\}$.
The first property of the edge pair ensures that $T'$ is a tree, and the others ensure that $T'$ is alternating.
\end{definition}

% \begin{example}
% We illustrate grafting in \Cref{GraftFigure}.
% \end{example}

\begin{figure}
\begin{tikzpicture}[main/.style = {draw, circle}, node distance=15mm]
\node[main] (r) {$r$};
\node[main] (s) [right of=r] {$s$};
\node[main] (t) [right of=s] {$t$};

\node[main] (u) [above right of=t] {$u$};
\node[main] (v) [right of=u] {$v$};

\node[main] (w) [below right of=t] {$w$};
\node[main] (x) [right of=w] {$x$};

\node[main] (q) [right of=v] {$q$};

\draw[->] (r) -- (s);
\draw[->, very thick] (s) -- (t);
\draw[red, -, densely dotted] (t) -- (u) node[midway, fill=white] {$e$};
\draw[red, -, very thick] (u) -- (v) node[midway, fill=white] {$f$};
\draw[->] (t) -- (w);
\draw[->, very thick] (w) -- (x);

\draw[densely dotted] (v) -- (q) node[midway, fill=white] {$g$};
\addtikzpadding;
\end{tikzpicture}
\[\Downarrow\]
\begin{tikzpicture}[main/.style = {draw, circle}, node distance=15mm]
\node[main] (r) {$r$};
\node[main] (s) [right of=r] {$s$};
\node[main] (t) [right of=s] {$t$};

\node[main] (u) [above right of=t] {$u$};
\node[main] (v) [right of=u] {$v$};

\node[main] (w) [below right of=t] {$w$};
\node[main] (x) [right of=w] {$x$};

\node[main] (q) [right of=v] {$q$};

\draw[red, ->] (r) -- (s);
\draw[red, ->, very thick] (s) -- (t);
\draw[red, ->] (t) -- (u) node[midway, fill=white] {$e$};
\draw[red, ->, very thick] (u) -- (v) node[midway, fill=white] {$f$};
\draw[->] (t) -- (w);
\draw[->, very thick] (w) -- (x);

\draw[red, densely dotted] (v) -- (q) node[midway, fill=white] {$g$};
\end{tikzpicture}
\[\Downarrow\]
\begin{tikzpicture}[main/.style = {draw, circle}, node distance=15mm]
\node[main] (r) {$r$};
\node[main] (s) [right of=r] {$s$};
\node[main] (t) [right of=s] {$t$};

\node[main] (u) [above right of=t] {$u$};
\node[main] (v) [right of=u] {$v$};

\node[main] (w) [below right of=t] {$w$};
\node[main] (x) [right of=w] {$x$};

\node[main] (q) [right of=v] {$q$};

\draw[red, -,very thick] (r) -- (s);
\draw[red, densely dotted] (s) -- (t);
\draw[red, -, very thick] (t) -- (u) node[midway, fill=white] {$e$};
\draw[red, densely dotted] (u) -- (v) node[midway, fill=white] {$f$};
\draw[densely dotted] (t) -- (w);
\draw[-, very thick] (w) -- (x);

\draw[red, -, very thick] (v) -- (q) node[midway, fill=white] {$g$};
\end{tikzpicture}
\caption{%
Begin by grafting a matched edge $f$ along edge $e$ onto an alternating tree $T$ rooted at $r$. This creates an augmenting path (middle, red) formed from a branch of $T$ and an edge $g$ not in $T$. Then augment through this path to produce a maximum matching.
Note that edges participating in the tree carry arrow heads (pointing toward the leaves), edges participating in the partial matching are bold, and the remaining edges are dotted.
}
\label{AugmentAndGraftFigure}
\end{figure}
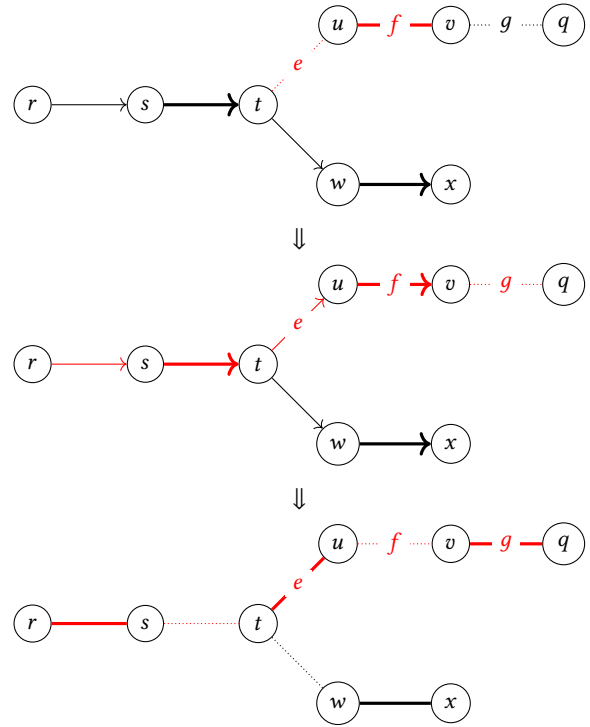

\begin{remark}[{\cite{Kuhn}, \cite[Proposition 5.7]{CCPS}}]\label{BipartiteAlgorithm}
Used together, these operations make up the Hungarian algorithm, \Cref{HungarianAlgorithm}, for producing maximum matchings on bipartite graphs.
\end{remark}

\begin{algorithm}[t]
\SetAlgoLined
\KwData{Bipartite graph $G$, intermediate matching $M$}
\KwResult{Maximum matching on $G$}
%  order the vertices in $G$\;
\While{true}{
    $G_L \cup G_R \leftarrow$ a vertex $2$--coloring of $G$\;
    $T \leftarrow \text{an unmatched vertex in $G_L$}$\;
    \While{true}{
         % augment case
        \uIf{there is an $e = (v, w)$ with $v \in T \cap G_L$, $w \in G_R, \not\in M$}{
            augment $T$ along $e$\;
            break\;
        }
        % graft case
        \uElseIf{there is an $e = (v, w)$ with $v \in T \cap G_L$, $w \in G_R \cap M$}{
            $m \leftarrow \text{match edge for $w$}$\;
            graft $m$ onto $T$ using $e$\;
        }
        \Else{
        \Return{}\;
        }
    }   
 }
 \caption{Hungarian algorithm}
 \label{HungarianAlgorithm}
\end{algorithm}

\begin{example}
We illustrate using an alternating tree to find a maximum matching on a bipartite graph in \Cref{AugmentAndGraftFigure}.
\end{example}

\subsection{Contract and expand blossom}

We now trade the bipartite assumption for two new tree operations.
Consider the situation of \Cref{NonbipartiteFigure}.
The alternating tree $T$ is ``maximally grafted'', but no edge emanating from its positive vertices reaches an unmatched vertex outside of $T$, so the algorithm of \Cref{HungarianAlgorithm} cannot make progress.
Nonetheless, an augmenting path exists: starting at $r$, one can proceed down the lower branch, cross vertically along $e$ to the upper branch, walk backwards through the tree to $u$, and finally cross $f$ to $q$.
These cycles, where an edge not in $T$ joins two of its positive vertices, are the essential new complication of the non-bipartite case.
Edmonds's first fundamental observation was that \emph{all} of the vertices within such a cycle are well-suited to constructing an augmenting path, and the second was that incorporating this into the search algorithm permits one to use it on an arbitrary graph.

\begin{definition}
Let $M$ be an intermediate matching on $G$, and let $v \in G$ be a vertex.
By an \define{alternating cycle rooted at $v$}, we mean an odd-length alternating path $C \subseteq G$ of distinct edges leading from $v$ and returning to $v$.%
\footnote{In particular, $C$ begins and ends with unmatched edges, lest $v$ participate in two matched edges.}
We say that we \define{contract a blossom} from $C$ when we contract (the full subgraph spanned by) $C$ to a point to produce a new graph $G' = G/C$.
We refer to the vertex $B$ in $G'$ which is the image of $C$ as the \define{blossom} or the \define{macrovertex}.%
\footnote{%
Some authors refer instead to the alternating cycle as the blossom.
}
The graph $G'$ inherits a matching $M'$, defined through three cases:
\begin{enumerate}
    \item If $v$ is matched in $M$, $B$ inherits that match in $M'$.
    \item The matched edges in $M$ internal to $C$ are discarded, as they've been contracted out of $G'$.
    \item All other matched edges in $M$ do not interact with $C$, so $M'$ inherits them verbatim.
\end{enumerate}
\end{definition}

\begin{example}
In \Cref{ContractAndAugmentFigure} we illustrate macrovertex contraction.
\end{example}

\begin{remark}
Note that, even if $G$ is singly edged (i.e., is not a multigraph), this need not be the case for the derived graph $G'$ after contracting a cycle $C \subseteq G$ to a macrovertex $B \in G'$.
If there are two distinct vertices $c, c' \in C$ both with edges to a third vertex $d \not\in C$, then $B$ inherits two distinct edges with target $d$.
% We illustrate this phenomenon in \Cref{MultigraphFigure}.
\end{remark}

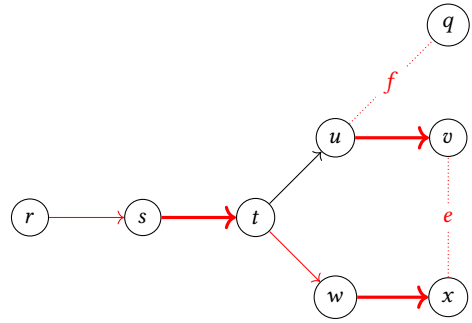
\begin{figure}
\begin{tikzpicture}[main/.style = {draw, circle}, node distance=15mm]
\node[main] (r) {$r$};
\node[main] (s) [right of=r] {$s$};
\node[main] (t) [right of=s] {$t$};

\node[main] (u) [above right of=t] {$u$};
\node[main] (v) [right of=u] {$v$};

\node[main] (w) [below right of=t] {$w$};
\node[main] (x) [right of=w] {$x$};

\node[main] (q) [above of=v] {$q$};

\draw[red, ->] (r) -- (s);
\draw[red, ->, very thick] (s) -- (t);
\draw[->] (t) -- (u);
\draw[red, ->, very thick] (u) -- (v);
\draw[red, ->] (t) -- (w);
\draw[red, ->, very thick] (w) -- (x);

\draw[red, densely dotted] (u) -- (q) node[midway, fill=white] {$f$};
\draw[red, densely dotted] (x) -- (v) node[midway, fill=white] {$e$};

\addtikzpadding;
\end{tikzpicture}
\caption{%
A non-bipartite scenario, where edge $e$ joins two positive vertices.
From the view of \Cref{HungarianAlgorithm}, it is illegal to augment along the edge $f$ because $u$ is not positive, hence the path through $T$ to its root $r$ is not alternating.
However, the edge $e$ could be used to produce an augmenting path, in red.
}
\label{NonbipartiteFigure}
\end{figure}

\begin{definition}
Conversely, suppose that we are given an intermediate matching $N'$ on a graph $G'$, where $G'$ was originally contracted from a graph $G$ along an alternating cycle $C$.
The matching $N'$ can then always be lifted to a matching $N$ on $G$, called the \define{expansion} of $N'$ along $C$.
Namely, note that the macrovertex participates in at most one matched edge in $N'$, which can be identified uniquely with an edge in $G$ incident on some vertex $w \in C$.
By rotating the alternating pattern of matches within $C$ so that the successive pair of unmatched edges is joined at $w$, and otherwise inheriting the matched edges from $N'$, we obtain a matching on $N$.
\end{definition}

\begin{example}
We illustrate match expansion in \Cref{ExpandFigure}.
\end{example}

\begin{figure}
\begin{tikzpicture}[main/.style = {draw, circle}, node distance=14mm]
\node[main] (r) {$r$};
\node[main] (s) [right of=r] {$s$};
\node[main] (t) [right of=s] {$t$};

\node[main] (u) [above right of=t] {$u$};
\node[main] (v) [right of=u] {$v$};

\node[main] (w) [below right of=t] {$w$};
\node[main] (x) [right of=w] {$x$};

\node[main] (q) [above of=v] {$q$};

\draw[->] (r) -- (s);
\draw[->, very thick] (s) -- (t);
\draw (t) to [bend left] (u);
\draw (u) to [bend left] (v);
\draw (t) to [bend right] (w);
\draw (w) to [bend right] (x);

\draw[red, densely dotted] (u) -- (q) node[midway, fill=white] {$f$};
\draw (x) to [bend right] (v);

\node[red, draw, inner sep=2mm, label=above left:$B$, fit=(t) (u) (v) (w) (x)] {};
\addtikzpadding;
\end{tikzpicture}
\[\Downarrow\]
\begin{tikzpicture}[main/.style = {draw, circle}, node distance=14mm]
\node[main] (r) {$r$};
\node[main] (s) [right of=r] {$s$};
\node[main] (t) [right of=s] {$t$};

\node[main] (u) [above right of=t] {$u$};
\node[main] (v) [right of=u] {$v$};

\node[main] (w) [below right of=t] {$w$};
\node[main] (x) [right of=w] {$x$};

\node[main] (q) [above of=v] {$q$};

\draw[red, very thick] (r) -- (s);
\draw[red, densely dotted] (s) -- (t);
\draw (t) to [bend left] (u);
\draw (u) to [bend left] (v);
\draw (t) to [bend right] (w);
\draw (w) to [bend right] (x);

\draw[red, very thick] (u) -- (q) node[midway, fill=white] {$f$};
\draw (x) to [bend right] (v);

\node[draw, inner sep=2mm, label=above left:$B$, fit=(t) (u) (v) (w) (x)] {};
\end{tikzpicture}
\caption{%
Continuing from \Cref{NonbipartiteFigure}, we contract the cycle formed by the edge $e$ into a macrovertex $B$.
This causes $f$ to be attached to a positive vertex in $G'$, and hence it participates in an augmenting path. Note that edges in the contracted cycle are curved solid lines.
}
\label{ContractAndAugmentFigure}
\end{figure}
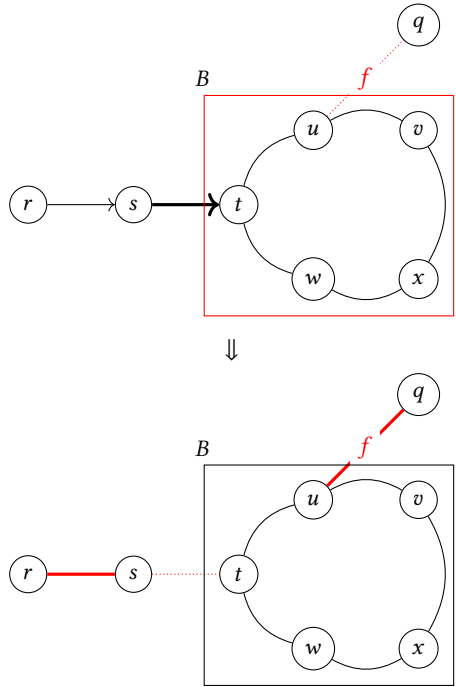

\begin{figure}
\begin{tikzpicture}[main/.style = {draw, circle}, node distance=14mm]
\node[main] (t) {$t$};

\node[main] (u) [above right of=t] {$u$};
\node[main] (v) [right of=u] {$v$};

\node[main] (w) [below right of=t] {$w$};
\node[main] (x) [right of=w] {$x$};

\node[main] (p) [above of=v] {$p$};

\node[main] (r) [below left of=t] {$r$};

\draw[->] (r) -- (w);

\draw (t) to [bend left] (u);
\draw (u) to [bend left] (v);
\draw (t) to [bend right] (w);
\draw (w) to [bend right] (x);
\draw (x) to [bend right] (v);

\draw[->, very thick] (u) -- (p);

\node[red, draw, inner sep=2mm, label=above left:$B$, fit=(t) (u) (v) (w) (x)] {};
\addtikzpadding;
\end{tikzpicture}
\[\Downarrow\]
\begin{tikzpicture}[main/.style = {draw, circle}, node distance=14mm]
\node[main] (t) {$t$};
\node[main] (u) [above right of=t] {$u$};
\node[main] (v) [right of=u] {$v$};
\node[main] (w) [below right of=t] {$w$};
\node[main] (x) [right of=w] {$x$};
\node[main] (p) [above of=v] {$p$};
\node[main] (r) [below left of=t] {$r$};

\draw[->] (r) -- (w);
\draw[->, very thick] (w) -- (t);
\draw[->] (t) -- (u);
\draw[->, very thick] (u) -- (p);
\draw[densely dotted] (u) -- (v);
\draw[densely dotted] (w) -- (x);
\draw[very thick] (x) -- (v);
\end{tikzpicture}
\caption{%
Beginning with a scenario similar to that of \Cref{ContractAndAugmentFigure}, but with $B$ in a negative position, we demonstrate the effect of macrovertex expansion. The edges within the cycle are tagged as matched edges so as to provide an alternating path from the root to the leaf, and edges not along that path are ejected from the tree.
}
\label{ExpandFigure}
\end{figure}
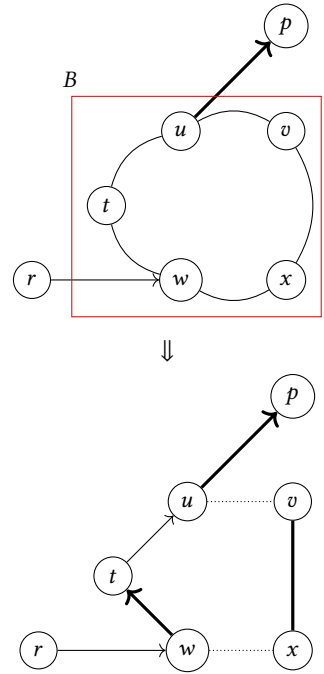

\begin{remark}[{\cite{EdmondsPathsTreesFlowers}, \cite[Theorem 5.10]{CCPS}}]%
\label{UnweightedClassicalAlgorithm}
As promised, we use these operations to extend \Cref{HungarianAlgorithm} to cover non-bipartite graphs.
We also modify the termination condition: given a maximally-grafted alternating tree $T \subseteq G$, if it does not admit any exiting edges along which we may augment, we additionally search for unmatched edges between positive vertices in $T$.
If such an edge is present, then we use it to form a minimum alternating cycle $C$ within $T$, and contract that cycle to form a new graph $G'$ with matching $M'$.
By contracting $T$ along $C$, we also produce a new alternating tree $T'$ in $G'$, and we proceed to run the matching algorithm from this new state.%
\footnote{%
Some authors call this the \define{derived graph} and \define{derived state}.
}
When this recursion returns, we either lift the modified matching on $G'$ to a modified matching on $G$ (via macrovertex expansion) and restart the outer loop, or we proceed to try the next unmatched vertex as a root.
\end{remark}

\begin{remark}
Representing a contracted graph in memory is somewhat onerous.
In particular, the algorithm described in \Cref{UnweightedClassicalAlgorithm} may involved nested contractions, where a vertex becomes contracted into a macrovertex, which in turn becomes contracted into another macrovertex, and so on.
We defer discussion of this point to \Cref{DistributedContractExpandSection}, where we will describe it in full in the setting most relevant to this paper.
\end{remark}

\subsection{Reweight}

Finally, \Cref{UnweightedClassicalAlgorithm} can be extended to produce maximum matchings of minimum weight by housing it in a ``primal-dual update'' scheme~\cite{DFF}.
Speaking very loosely, the dual step sorts the edges not yet considered by the amount of weight their participation would incur, and the primal step consists of \Cref{UnweightedClassicalAlgorithm} as applied to these minimally-sifted graphs.

\begin{definition}
The \define{internal weight} of a (macro)vertex is a numeric value managed by the dual step.
The \define{adjusted weight} of an edge is its weight after subtracting the internal weights of its two endpoints and those of any macrovertices to which they belong.
The \define{weightless subgraph} $G_\circ$ is then the maximal subgraph with the same vertices but retaining only edges of adjusted weight zero.
\end{definition}

\begin{remark}[{\cite[Theorem 5.20]{CCPS}}]\label{AdjEdgeWeightsAreNonneg}
The internal weight of a vertex may be negative.
However, if the edge weights of a graph are all nonnegative, then so are the algorithm's internal weights and adjusted edge weights.
\end{remark}

\begin{definition}\label{ReweightOperation}
Let $G$ be an edge-weighted graph with internal vertex weights, $M$ an intermediate matching on $G_\circ$, and $T$ an alternating tree in $G_\circ$.
The \define{reweighting} of $T$ is an update to the internal weights of $G$ given by increasing the internal weights of the positive vertices $T_+$ and decreasing the internal weights of the negative vertices $T_-$ by the amount given by the minimum of the following three sets:
\begin{description}
    \item[Graft/augment candidates] The adjusted edge weights of edges in $G$ joining vertices in $T_+$ to vertices not in $T$.
    \item[Contract candidates] The adjusted edge weights of edges in $G$, scaled down by half, joining vertices in $T_+$ to each other.
    \item[Expand candidates] The internal weights of vertices in $T_-$ which are macrovertices.
\end{description}
This enlarges $G_\circ$ while maintaining the weightlessness of $T$.
\end{definition}

The edge-weighted blossom algorithm is then given by applying the non-bipartite blossom algorithm to $G_\circ$, with the additional step that the alternating subtree should attempt a reweight operation before the loop gives up and moves on to the next candidate root vertex.

\begin{example}
We illustrate a sample run of this algorithm in \Cref{ReweightFigure}.
\end{example}

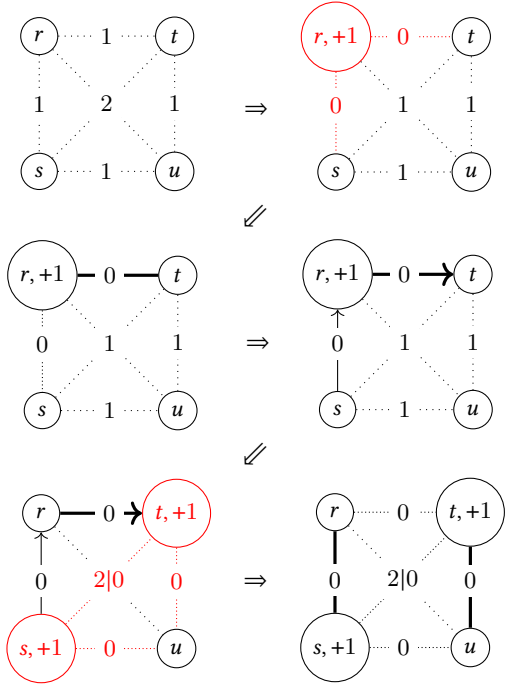
\begin{figure}
\begin{tabular}{lll}
\raisebox{-.42\height}{
\begin{tikzpicture}[main/.style = {draw, circle}, node distance=18mm]
\node[main] (r) {$r$};
\node[main] (s) [below of=r] {$s$};
\node[main] (t) [right of=r] {$t$};
\node[main] (u) [right of=s] {$u$};
\draw[dotted] (r) -- (t) node[midway, fill=white]{$1$};
\draw[dotted] (r) -- (s) node[midway, fill=white]{$1$};
\draw[dotted] (s) -- (u) node[midway, fill=white]{$1$};
\draw[dotted] (t) -- (u) node[midway, fill=white]{$1$};
\draw[dotted] (s) -- (t);
\draw[dotted] (r) -- (u) node[midway, fill=white]{$2$};
\addtikzpadding;
\end{tikzpicture}
}
&
\hspace{0.15cm}
$\Rightarrow$
&
\raisebox{-.39\height}{
\begin{tikzpicture}[main/.style = {draw, circle}, node distance=18mm]
\node[red, main] (r) {$r,+1$};
\node[main] (s) [below of=r] {$s$};
\node[main] (t) [right of=r] {$t$};
\node[main] (u) [right of=s] {$u$};
\draw[red, densely dotted] (r) -- (t) node[pos=0.4, fill=white]{$0$};
\draw[red, densely dotted] (r) -- (s) node[pos=0.41, fill=white]{$0$};
\draw[dotted] (s) -- (u) node[midway, fill=white]{$1$};
\draw[dotted] (t) -- (u) node[midway, fill=white]{$1$};
\draw[dotted] (r) -- (u);
\draw[dotted] (s) -- (t) node[midway, fill=white]{$1$};
\addtikzpadding;
\end{tikzpicture}
}
\end{tabular}
\[\Swarrow\]
\begin{tabular}{lll}
\hspace{-0.23cm}
\raisebox{-.43\height}{
\begin{tikzpicture}[main/.style = {draw, circle}, node distance=18mm]
\node[main] (r) {$r,+1$};
\node[main] (s) [below of=r] {$s$};
\node[main] (t) [right of=r] {$t$};
\node[main] (u) [right of=s] {$u$};
\draw[-, very thick] (r) -- (t) node[pos=0.4, fill=white]{$0$};
\draw[densely dotted] (r) -- (s) node[pos=0.41, fill=white]{$0$};
\draw[dotted] (s) -- (u) node[midway, fill=white]{$1$};
\draw[dotted] (t) -- (u) node[midway, fill=white]{$1$};
\draw[dotted] (r) -- (u);
\draw[dotted] (s) -- (t) node[midway, fill=white]{$1$};
\end{tikzpicture}
}
&
\hspace{0.13cm}
$\Rightarrow$
&
\raisebox{-.425\height}{
\begin{tikzpicture}[main/.style = {draw, circle}, node distance=18mm]
\node[main] (r) {$r,+1$};
\node[main] (s) [below of=r] {$s$};
\node[main] (t) [right of=r] {$t$};
\node[main] (u) [right of=s] {$u$};
\draw[->, very thick] (r) -- (t) node[pos=0.39, fill=white]{$0$};
\draw[->] (s) -- (r) node[pos=0.58, fill=white]{$0$};
\draw[dotted] (s) -- (u) node[midway, fill=white]{$1$};
\draw[dotted] (t) -- (u) node[midway, fill=white]{$1$};
\draw[dotted] (r) -- (u);
\draw[dotted] (s) -- (t) node[midway, fill=white]{$1$};
\end{tikzpicture}
}
\end{tabular}
\[\Swarrow\]
\begin{tabular}{lll}
\raisebox{-.47\height}{
\begin{tikzpicture}[main/.style = {draw, circle}, node distance=18mm]
\node[main] (r) {$r$};
\node[red, main] (s) [below of=r] {$s, +1$};
\node[red, main] (t) [right of=r] {$t, +1$};
\node[main] (u) [right of=s] {$u$};
\draw[->, very thick] (r) -- (t) node[pos=0.6, fill=white]{$0$};
\draw[->] (s) -- (r) node[pos=0.4, fill=white]{$0$};
\draw[red, densely dotted] (s) -- (u) node[pos=0.42, fill=white]{$0$};
\draw[red, densely dotted] (t) -- (u) node[pos=0.42, fill=white]{$0$};
\draw[dotted] (r) -- (u);
\draw[red, densely dotted] (s) -- (t) node[midway, fill=white]{$2 | 0$};
\end{tikzpicture}
}
&
$\Rightarrow$
&
\raisebox{-.465\height}{
\begin{tikzpicture}[main/.style = {draw, circle}, node distance=18mm]
\node[main] (r) {$r$};
\node[main] (s) [below of=r] {$s, +1$};
\node[main] (t) [right of=r] {$t, +1$};
\node[main] (u) [right of=s] {$u$};
\draw[-, densely dotted] (r) -- (t) node[pos=0.6, fill=white]{$0$};
\draw[-, very thick] (s) -- (r) node[pos=0.4, fill=white]{$0$};
\draw[densely dotted] (s) -- (u) node[pos=0.42, fill=white]{$0$};
\draw[-, very thick] (t) -- (u) node[pos=0.42, fill=white]{$0$};
\draw[dotted] (r) -- (u);
\draw[densely dotted] (s) -- (t) node[midway, fill=white]{$2 | 0$};
\end{tikzpicture}
}
\end{tabular}
\caption{%
A demonstration of the weighted blossom algorithm.
Beginning with the root vertex $r$, we first minimally increase its weight so some of its edges become weightless (the densely dotted edges).
We then select one such edge along which to augment the matching.
We choose $s$ to be our next root, to which we graft the previous matched edge.
We are out of weightless edges on which to act, so we minimally reweight the tree to produce more.
We choose the augmenting path $utrs$, along which we augment to produce the final matching.
}
\label{ReweightFigure}
\end{figure}

% note for later: might be nice to show how weights + macrovertices interact in an example

\subsection{The main loop}

Altogether, these operations make up the serial blossom algorithm, \Cref{SerialBlossomAlgorithm}, for producing minimum-weight maximum matchings on edge-weighted graphs.

\begin{remark}
To simplify the outer loop slightly, we have assumed while writing \Cref{SerialBlossomAlgorithm} that $G$ is fully connected with an even number of vertices.
This means that \emph{all} of the vertices will participate in a maximum (perfect) matching, and the objective is only to minimize the total weight.
Matching all of the vertices can be used as a stopping condition.
\end{remark}

% TODO: replace [t] with something more appropriate
\begin{algorithm}[t]
\SetAlgoLined
\KwData{Edge-weighted fully-connected graph $G = (V, E)$ with $|V|$ even}
% \KwData{Edge-weighted, fully-connected, even graph $G$}
\KwResult{Minimum-weight perfect matching on $G$}
%  order the vertices in $G$\;
 weight all the vertices in $G$ to $0$\;
 \While{$G$ has unmatched vertices}{
  $T \leftarrow \text{an unmatched vertex in $G$}$\;
  \While{true}{
   $T_+ \leftarrow \text{vertices of $T$ of even depth}$\;
   $T_- \leftarrow \text{vertices of $T$ of odd depth}$\;
   $G_\circ \leftarrow \text{subgraph of weightless edges}$\;
   % augment case
   \uIf{there is an $e = (v, w) \in G_\circ$ with $v \in T_+$, $w \not\in (M \cup T)$}{
    augment $T$ along $e$\;
    break\;}
   % graft case
   \uElseIf{there is an $e = (v, w) \in G_\circ$ with $v \in T_+$, $w \not\in T$ matched}{
    $m \leftarrow \text{match edge for $w$}$\;
    graft $m$ onto $T$ using $e$\;}
   % blossom case
   \uElseIf{there is an $e = (v, w) \in G_\circ$ with $v, w \in T_+$}{
    $a \leftarrow$ nearest common ancestor of $v$ and $w$\;
    $C \leftarrow$ alternating cycle rooted at $a$\;
    contract $C$ into macrovertex $B$ using $e$\;
    }
   % expand case
   \uElseIf{there is a weightless macrovertex $B \in T_-$}{
    expand $B$\;}
   % reweight case
   \uElseIf{$r \ne 0$ for $r$ the candidate reweight amount for $T$}{
    reweight $T$ by $r$\;
   }
   \Else{
    break\;
    }}}
 % finish expansion
 \While{$G$ contains macrovertices}{
  $B \leftarrow$ a macrovertex in $G$\;
  expand $B$\;}
 \caption{Serial blossom algorithm}
 \label{SerialBlossomAlgorithm}
\end{algorithm}

\begin{theorem}[{\cite{EdmondsMaximumMatching01}, \cite[Theorem 5.16]{CCPS}}]\label{ClassicalCorrect}
\Cref{SerialBlossomAlgorithm} is \emph{correct}: it always terminates, and on termination it emits a perfect matching of minimum weight.
\qed
\end{theorem}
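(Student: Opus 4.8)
The plan is to certify optimality by linear-programming duality, exhibiting the output matching together with the internal weights as a primal–dual pair satisfying complementary slackness. First I would write down the primal LP for minimum-weight perfect matching — minimize $\sum_e w_e x_e$ subject to the degree constraints $\sum_{e \ni v} x_e = 1$ at every vertex, the odd-set (blossom) inequalities $\sum_{e \subseteq S} x_e \le (|S|-1)/2$ for every odd $S$, and $x_e \ge 0$ — together with its dual, whose variables are exactly the internal weights of \Cref{ReweightOperation}: a free variable $y_v$ per vertex and a nonnegative variable $Y_S$ per odd set $S$, which I would identify with the internal weight of the macrovertex obtained by contracting $S$. Under this dictionary the \emph{adjusted weight} of an edge is precisely its dual slack, so dual feasibility becomes the statement that all adjusted weights are nonnegative and all $Y_S \ge 0$, and the weightless subgraph $G_\circ$ is exactly the set of tight (zero-slack) edges.

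The heart of the argument is to maintain two invariants throughout the run: (i) dual feasibility, and (ii) complementary slackness, namely that $M \subseteq G_\circ$ (only tight edges are matched) and that every contracted blossom $S$ is \emph{full}, i.e.\ contains exactly $(|S|-1)/2$ matched edges. I would verify that each of the five operations preserves both. Graft and augment act only along edges of $G_\circ$, so they keep $M \subseteq G_\circ$; augment additionally relies on Edmonds's observation (\Cref{NonbipartiteFigure}) that an odd tight cycle yields a genuine augmenting path, and I would check that flipping it leaves every blossom full. Contract introduces a new set $S$ with $Y_S := 0$ (costless for feasibility) that is full by the very definition of an alternating cycle; expand acts only on a negative macrovertex whose $Y_S = 0$, making its slackness condition vacuous, and rearranges the internal matching so fullness is preserved for the surviving sub-blossoms. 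Reweight changes the duals by exactly the minimum slack among the graft/augment, contract, and expand candidates; I would check that this is the largest step driving no adjusted weight and no $Y_S$ below zero, so feasibility survives, while $T$ stays weightless and no tree match-edge loses tightness.

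Granting the invariants, optimality is immediate: upon termination $M$ is a perfect matching, hence primal feasible, and $(y, Y)$ is dual feasible, so weak duality gives $w(M') \ge \sum_v y_v + \sum_S \frac{|S|-1}{2} Y_S$ for \emph{every} perfect matching $M'$; complementary slackness turns this into an equality for our own $M$, so $M$ meets the dual lower bound and is therefore of minimum weight. What remains — and what I expect to be the main obstacle — is \emph{termination}. There are at most $|V|/2$ augmentations, since each raises $|M|$ by one, so it suffices to bound the work in one phase between augmentations. Grafts strictly grow the tree and contracts strictly shrink the vertex count, so both are bounded by $|V|$; the delicate point is reweight, which alone changes no combinatorial structure. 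Here I would argue that each reweight strictly increases the dual objective (the alternating tree always carries one more positive node than negative) and that, by the choice of step size, every reweight immediately creates a new tight edge or a newly weightless macrovertex, forcing a graft, contract, expand, or augment on the very next iteration; hence reweights cannot recur unboundedly within a phase. Making this interleaving rigorous — in particular ruling out cycling between contract and expand via a suitable potential on the nesting of blossoms — is the step that demands the most care.
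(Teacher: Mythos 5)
The paper does not actually prove this theorem---it is stated with a terminal QED and delegated entirely to Edmonds~\cite{EdmondsMaximumMatching01} and to \cite[Theorem 5.16]{CCPS}---and your primal--dual, complementary-slackness argument is precisely the proof those sources give (and which the paper itself summarizes in its correctness section: matching polytope, dual feasibility, per-operation invariants, counting argument for termination), so your approach is essentially the same as the one the paper relies on. The only slip worth repairing is that you mix two equivalent LP formulations: nonnegative \emph{adjusted weights} express dual feasibility for the cut form of the blossom constraints ($x(\delta(S))\ge 1$, with duals $Y_S\ge 0$ entering weak duality as $+\sum_S Y_S$), not for the interior form $\sum_{e\subseteq S}x_e\le (|S|-1)/2$ that you wrote down, whose dual variables carry the opposite sign, so the inequality $w(M')\ge \sum_v y_v+\sum_S \tfrac{|S|-1}{2}\,Y_S$ is not correct as stated in either convention and should be rewritten consistently in one of them.
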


\begin{remark}\label{ResetVariant}
There is a variant of the algorithm in which the main loop resets the tree $T$ after each primal and dual state update, i.e., each update other than grafting.
This variant is less strict than \Cref{SerialBlossomAlgorithm}, in the sense that it admits more execution paths: it can always recreate the state preserved here which it has erased, but \Cref{SerialBlossomAlgorithm} is not always able to erase grafting decisions it has made.
In trade, \Cref{SerialBlossomAlgorithm} performs less recomputation as the state evolves, giving it better runtime properties.
\end{remark}

% \begin{remark}[{\cite[Corollary 26.1a]{Schrijver}}]
% Another popular variant of the outer loop is to reweight all possible vertices at once: if the algorithm is unable to make progress on the weightless subgraph, it calculates sets of all positive and negative nodes (based on their distance to unmatched vertices along alternating paths), reweights all of those nodes according to the rules of \Cref{ReweightOperation}, and reconsiders the new weightless subgraph.
% This variant has the additional pleasant property that it produces a minimum weight matching \emph{of every cardinality} as it proceeds~\cite[Theorem 7]{KalbelWalter}, whereas \Cref{SerialBlossomAlgorithm} produces a minimum weight representative over possible matchings on its particular intermediate set of vertices.
% \end{remark}

\section{The distributed algorithm}

In this section, we rebuild the serial weighted blossom algorithm from \Cref{ClassicalSection} in a concurrent framework.
Our intention is to alleviate a fundamental bottleneck in the serial algorithm: the outer loop fixes an unmatched vertex $v$ to use as the root of an alternating tree $T$, and all operations proceed in view of $T$.
One can imagine a variant of the algorithm which instead constructs a forest of alternating trees, and one can further imagine a decentralized variant where each tree in the forest%
\footnote{%
Indeed, each vertex in the tree.
}
is responsible for ``managing itself''.
This is the manner of algorithm which we now describe.

We adopt the language of the \textsf{DECOUPLED} model of computing~\cite{Linial,CDGFRR,DGFFR}, in which we consider an actor-based programming model~\cite{HBS} which rides atop a message-passing system with guaranteed, ordered delivery of messages.
In a prequel to this paper, we described a specific such system~\cite{PetersonKaralekasAetherPaper} as well as an emulator for it~\cite{aether}, and as a companion to this paper we provide an implementation of our algorithm within that framework~\cite{anatevka}.

\subsection{The distributed environment}\label{DistributedEnvironmentSection}

Our preferred model of distributed computing consists of a family of actors, called \define{processes}, each with a public address, an interrupt table of handlers which service messages arriving at the public address, and a continuation representing a current computational state.
On each computational step, the process walks every message received at its public address using the message handlers, then evaluates its current continuation, which either produces a new continuation or terminates the process.
These components are subject to some basic guarantees:
\begin{itemize}
    \item A non-terminated process will eventually act again.
    \item A message sent between processes will eventually arrive.
    \item Messages sent from the same originating process to the same destination address will arrive in order.
\end{itemize}
In practice, the precise timing of these operations is influenced by many factors (e.g., network pressure), and we assume no guarantees about synchronicity or bounded delay.
% A central organizational principle of any distributed algorithm is locality of state: individual processes are responsible for their own roles in the algorithm, and large-scale coordination among processes inhibits their parallel action, so should be avoided whenever possible.
As is common in distributed graph algorithms, we will spawn a process for each vertex in the problem graph, and their communication will run along edges in the problem graph.
The job of these processes, then, is to coordinate with one another, decide which vertices are matched to which others, and signal when they have finished working.

Rather than embed the problem instance directly into the processes, we instead provide an oracle service, called the \define{dryad},%
\footnote{As in: a being which attends to the health and welfare of flora.}
which responds to API requests with details about the problem graph.
These messages are:
\begin{description}
    \item[\texttt{message-discover}] The dryad replies with a list of vertices which are connected by edges to the querying vertex.
    \item[\texttt{message-sprout}] The sender vertex announces that it has begun to participate in the intermediate matching.
\end{description}

% \begin{remark}
% In our application to quantum error correction~\cite{PetersonKaralekasArtemisPaper}, the dryad is \emph{not} a monolithic process, and we take advantage of this to manage the evolution of an online blossom solver and to cut down on message complexity.
% From the perspective of this paper, however, we will not require any functionality beyond the above messages, and we defer all discussion of these considerable complications to the sequel.
% \end{remark}

\subsection{The blossom main loop}
\label{DistributedBlossomMainLoop}

As in the serial case, the operations available to an alternating tree are: graft an external matched edge, augment through another alternating tree,%
\footnote{%
In the serial case, the other tree is a lone vertex, joined to a positive vertex in the tree by an unmatched edge.
}
form a macrovertex from a cycle, expand a macrovertex, reweight its vertices, and do nothing (which we call ``pass'' or ``hold'').
We will first discuss how an alternating tree selects among these operations, called a \define{scan}, turning to the distributed implementation of these operations only in subsequent sections.

Since this procedure makes use of the alternating tree structure, we will need to understand how this state is maintained by the algorithm.
We store the relevant information in the following slots on each blossom process, where a ``blossom'' may represent a vertex or a family of (macro)vertices contracted into a macrovertex:%
\footnote{%
The convention we use going forward of having ``blossom'' mean either vertex or macrovertex is our own, and differs from the one used in \Cref{ClassicalSection}.
}

\begin{description}
  \item[\texttt{addr}]
  The public address of the blossom.
  Messages sent to this address will be handled by the blossom.%
  \footnote{%
  Unless explicitly stated, when we refer to blossoms we are referring to their addresses, not the processes themselves.
  We use Greek letter and calligraphic variables to distinguish addresses from other data.
  }
  \item[\texttt{id}]
  The internal name of the blossom.
  Blossoms also have at their disposal a function called \texttt{edge-weight} which takes in two \texttt{id}s and returns the weight of the edge between them.
  \item[\texttt{match-edge}]
  The edge connecting this blossom to the blossom to which it is matched, if any.
  \item[\texttt{parent}]
  The optional edge connecting this blossom to its parent in an alternating tree, if it is not the root.
  \item[\texttt{children}]
  A list of edges connecting this blossom to its children in an alternating tree, if any.
  \item[\texttt{positive?}]
  Parity of the distance to the root of the alternating tree: \texttt{true} if even, \texttt{false} if odd.
  \item[\texttt{pistil}]
  If this blossom has been contracted as part of a cycle, this holds the address of its immediate ``parent'' macrovertex.
  \item[\texttt{petals}]
  If this blossom is a macrovertex, this stores the cyclic list of edges which were contracted into it.
  \item[\texttt{internal-weight}]
  The internal weight of the blossom.
  \item[\texttt{pingable}]
  A flag indicating whether the blossom responds to inbound ping requests (see immediately below) or it allows them to queue for later perusal.
  \item[\texttt{paused?}]
  If toggled, continue to respond to messages but take no other actions.
  \item[\texttt{dryad}]
  The address of this blossom's dryad.
\end{description}

Most operations that an alternating tree might perform correspond to the edges emanating from the vertices in the tree.
However, the action to which an edge corresponds depends on the states of \emph{both} of the edge's vertices: the ``target'' vertex may belong to the same tree, a different tree (and at even or odd height), or to no tree at all.
To discern the appropriate action, the (positive) source vertex sends a \texttt{message-ping} to a target vertex with its half of the data, then listens for a \texttt{message-pong} in reply with the calculated operation.
Attached to most calculated operations is the sponsoring edge, which in this context is a data structure with 4 address slots: \texttt{source-blossom}, \texttt{source-vertex}, \texttt{target-vertex}, and \texttt{target-blossom}.
The vertex-suffixed slots are self explanatory; the blossom-suffixed slots return the address of the topmost macrovertex associated with this vertex (if one exists), and otherwise return the vertex address.
In this way edges keep track of the vertices and topmost blossoms associated with each endpoint of the directed edge, which will later make it easier to manage macrovertex contraction and expansion.
The procedure for calculating sponsored actions and edges is described in \Cref{PingPongAlgorithm}, and the information provided by the \texttt{message-ping} includes:
\begin{description}
    \item[\texttt{root}] The root blossom of the source alternating tree.
    \item[\texttt{blossom}] The topmost macrovertex to which the source belongs.
    If the source is not contained within a macrovertex, this is (the address of)
    the source vertex itself.
    \item[\texttt{weight}] The sum of the source  vertex's internal weight and the internal weights of all macrovertices to which it belongs.
    % \item[\texttt{reply-address}] The (ephemeral) address to send the resulting \texttt{message-pong} to.
    \item[\texttt{hold-cluster}] A set of roots that will have future implications on how operations are processed (see \Cref{MultireweightSection}).%
\footnote{%
We call this \texttt{internal-roots} or \texttt{internal-root-set} in our implementation, but have renamed it here to more clearly link it to its intended purpose.}
    \item[\texttt{addr}] The address of the source vertex, used to build the sponsoring edge and send a reply.
    \item[\texttt{id}] The name of the source vertex, used to calculate the underlying edge weight joining it to the recipient.
    % \item[\texttt{internal-roots}] Those roots to exclude from augmentation, e.g., the source root.
\end{description}

\begin{algorithm}[t]
\SetAlgoLined
\KwData{a (target) vertex process $P$, a \texttt{message-ping} $m$}
\KwResult{a \texttt{message-pong} reply containing an action $a$}
 $\rho \leftarrow$ the root of the tree to which $P$ belongs\;
 $\beta \leftarrow$ the topmost macrovertex to which $P$ belongs\;
 $w \leftarrow$ the macrovertex-adjusted  internal weight of $P$\;
 $\mathcal H \leftarrow$ \texttt{hold-cluster($m$)}\;
 $e \leftarrow \texttt{edge(blossom($m$)}, \texttt{addr($m$)}, \texttt{addr($P$)}, \beta\texttt{)}$\;
 $w_e \leftarrow \texttt{edge-weight(id($P$)},\texttt{id($m$))}$\;
 $w_e' \leftarrow w_e - w - \texttt{weight}(m)$\;
 \uIf{$P$ is matched, and $\texttt{match-edge($P$)}$ $= e$}{
  $a \leftarrow$ \texttt{pass}\;
 }
 \uElseIf{$\beta$ $=$ \texttt{blossom($m$)}}{
  $a \leftarrow$ \texttt{pass}\;
 }
 \uElseIf{$\beta$ is negative}{
  $a \leftarrow$ \texttt{hold($\rho$)}\;
 }
 \uElseIf{$w_e'$ is nonzero}{
  \If{$\rho = \texttt{root($m$)}$, or $\rho \in \mathcal H$ and $\texttt{root($m$)}$ $\in \mathcal H$}{
   $w_e' \leftarrow w_e'/2$\;
%   $w' \leftarrow$ $w/2$ \textbf{if} $p = \mathtt{root(\mathit \rho)}$ \textbf{else} $w$\;
  }
  $a \leftarrow$ \texttt{reweight($w_e'$)}\;
%   \uIf{$\rho$ $=$ \texttt{root($m$)}}{$a \leftarrow$ \texttt{reweight($w/2$)}\;}
%   \Else{$a \leftarrow$ \texttt{reweight($w$)}\;}
 }
 \uElseIf{$\rho$ $\ne$ \texttt{root($m$)}, and $\rho$ is matched}{
  $a \leftarrow \texttt{graft($e$)}$\;
 }
 \uElseIf{$\rho$ $\ne$ \texttt{root($m$)}, and $\rho$ is unmatched}{
  % if they're both in the internal root set, halve the weight
  $a \leftarrow \texttt{augment($e$)}$\;
 }
 \ElseIf{$\rho$ $=$ \texttt{root($m$)}}{
  $a \leftarrow \texttt{contract($e$)}$\;
 }
 send \texttt{message-pong($a$)} to \texttt{addr($m$)\;}
\caption{\texttt{message-ping} handler}
\label{PingPongAlgorithm}
\end{algorithm}

\begin{remark}
The slots on a \texttt{message-ping} and on a blossom process are locally accessible during the execution of \Cref{PingPongAlgorithm}.
But, of course, not all data that we use during the course of an algorithm are locally computable.
For example, $P$ might send messages to other processes to compute the addresses $\beta$ and $\rho$ (and to compute properties of their underlying processes, e.g. whether they are matched), in \Cref{PingPongAlgorithm}.
\end{remark}

\begin{algorithm}[t]
\SetAlgoLined
\KwData{a blossom process $B$, a \texttt{message-scan} $m$}
\KwResult{(optional) a supervisor sponsoring an action $a$}
 $\rho  \leftarrow \texttt{root($m$)}$\;
 $\beta  \leftarrow \texttt{blossom($m$)}$ \textbf{if} $\texttt{pistil($B$)}$ \textbf{else} $\texttt{addr($B$)}$\;
 $w  \leftarrow \texttt{weight($m$)} + \texttt{internal-weight($B$)}$\;
 $\mathcal H \leftarrow \texttt{hold-cluster($m$)}$\;
 $\alpha \leftarrow \texttt{addr($B$)}$\;
 $\delta  \leftarrow \texttt{dryad($B$)}$\;
 $E \leftarrow \texttt{children($B$)}$\;

 \uIf{$\texttt{positive?($B$)}$}{
  $a$ $\leftarrow \texttt{pass}$\;
  $E \leftarrow E \cup \texttt{petals($B$)}$\;
 }
 \ElseIf{$B$ is a macrovertex}{
  $a \leftarrow \texttt{expand($\alpha$)}$\;
 }
 
 \If{$\texttt{positive?($B$)}$ and $B$ is not a macrovertex}{
  send \texttt{message-discover} to $\delta$, store response in  $\mathcal N$\;
  \ForEach{$\nu \in \mathcal N$}{
%   $\alpha \leftarrow$ register a new reply address\;
   $p \leftarrow \texttt{message-ping(}\rho, \beta,  w, \mathcal H, \alpha, \texttt{id($B$))}$\;
   send $p$ to $\nu$, store response in $a'$\;
   $a \leftarrow \texttt{unify-pongs($a, a', \mathcal H$)}$\;
   }
 }
 % limit\;  this is a QEC tweak
 \ForEach{$e \in E$}{
  $\tau \leftarrow \texttt{target-blossom($e$)}$\;
%   $\alpha \leftarrow$ register a new reply address\;
  $s \leftarrow \texttt{message-scan(}\rho, \beta,  w, \mathcal H, \alpha$\texttt{)}\;
  send $s$ to $\tau$, store response in $a'$\;
  $a \leftarrow \texttt{unify-pongs($a, a', \mathcal H$)}$\;
 }
%  \If{$\texttt{positive?}(B)$}{
%   $\mathcal N$ $\leftarrow$ (query $\delta$ to discover neighbors of $B$)\;
%   \ForEach{$\nu \in \mathcal N$}{
%   send \texttt{ping} to $\nu$, store response in $a'$\;
%   $a \leftarrow \texttt{unify-pongs}\mathtt{(\mathit a, \mathit{a'})}$
%   }
%   \ForEach{$\gamma \in \mathtt{petals(\mathit B)}$}{
%   send \texttt{scan} to $\gamma$, store response in $a'$\;
%   $a \leftarrow \texttt{unify-pongs}\mathtt{(\mathit a, \mathit{a'})}$
%  }}
%  \Return{a}
 \uIf{\texttt{addr($m$)}}{
  send \texttt{message-pong($a$)} to \texttt{addr($m$)}\;
 }
 \ElseIf{$B$ is still an eligible root, and $a \ne$ \texttt{pass}}{
  pause $B$ and spawn a supervisor sponsoring $a$\;
 }

 \caption{\texttt{message-scan} handler}
 \label{ScanAlgorithm}
\end{algorithm}

We coordinate the different possible pings within a tree using the procedure described in \Cref{ScanAlgorithm}.
This algorithm walks over the edges attached to the vertices participating in the tree, asking each to sponsor an operation.
Because these operations carry a preference order, in the sense that the availability of one operation can preclude the consideration of another, we can then unify the responses into a single course of action according to the following ordered set of rules:

\begin{enumerate}
    \item
    It is possible for an edge to recuse itself from sponsoring an operation, typically from a misguided ping (e.g., if a vertex should ping itself).
    We call this situation a \texttt{pass}.
    Given a choice between a \texttt{pass} and any other option, we will always prefer to act by the other option.
    \item
    A \texttt{hold} operation arises when the ping targets a negative vertex.
    In the serial algorithm there are no interactions between positive and negative vertices, but there are implications in the distributed setting which will require us to process \texttt{hold}s distinctly from \texttt{pass}es (see \Cref{MultireweightSection}).
    For now, we will treat both operations the same.\label{UnifyPongs2}
    % , and it indicates an instruction to take no action, since the serial algorithm has no interactions between positive and negative nodes.
    % However, a future consideration will require us to make a modification to how \texttt{hold}s are processed (see \Cref{MultireweightSection}).
    % such an operation carries with it the identifier of the root of the target tree, and if it belongs to the \define{internal root set}, we always prefer the other operation.
    % \item Unless the other option is a \texttt{reweight}, we prefer any other non-null operation to a \texttt{hold}.
    % \item Given a choice between an \texttt{augment} and any non-null, non-\texttt{augment} operation, we prefer the other operation.%
    % \footnote{%
    % This isn't necessary for correctness, but in practice it seems to accelerate the algorithm.
    % }
    \item
    % Given a choice between two \texttt{hold}s, we \texttt{hold} with the union of the two internal root sets.
    As seen in \Cref{PingPongAlgorithm}, \texttt{hold} operations are initialized with the root $\rho$ of the target vertex.
    The operation stores this root in a slot called \texttt{root-bucket}, which is a set (of length one, to start).
    Given a choice between two \texttt{hold}s, we instead \texttt{hold} with the union of the \texttt{root-bucket}s of each individual \texttt{hold}.
    \item
    Given a choice between a \texttt{reweight} and any operation other than a \texttt{reweight}, we prefer the other operation.
    \item
    Given a choice between two \texttt{reweight} operations, we prefer to \texttt{reweight} by the lesser amount.
    \item
    Finally, given a choice between two nontrivial operations, we arbitrarily prefer one over the other.
\end{enumerate}

\noindent
Altogether, these rules define the method \texttt{unify-pongs} referenced in \Cref{ScanAlgorithm}.

Finally, there are the matters of initiating a \texttt{scan} over a tree and acting on the resulting sponsored operation.
We delegate the responsibility of initiation to the root: each unmatched blossom whose \texttt{paused?} flag is not set to \texttt{true} checks whether it is a root in the forest and, if so, sends itself a \texttt{message-scan}, then awaits a reply.
The slot specification for a \texttt{message-scan} is identical to that of a \texttt{message-ping}, but without an \texttt{id} slot.
When sending the initial \texttt{message-scan} to itself, the root fills out the message slots as follows: \texttt{root} is the root’s address, \texttt{blossom} is null, \texttt{weight} is zero, \texttt{addr} is null, and \texttt{hold-cluster} is a set containing just the root’s address.\label{ScanArguments}
If the result of this \texttt{scan} is not a \texttt{pass}, and if the root has not changed state since the start of the \texttt{scan}, it spawns a \define{supervisor} process seeded with the result and sets its own \texttt{paused?} flag to \texttt{true}.
It is then the supervisor's responsibility to carry out the operation and to unpause the root when complete.

\begin{remark}
In our implementation, we describe the operation sponsored by an edge as a pair of an atom (\texttt{pass}, \texttt{graft}, \texttt{augment}, \texttt{contract}, \texttt{expand}, \texttt{hold}) and a numeric value which records the adjusted weight of the edge.
In particular, we do not use a separate atom for reweighting: a \texttt{reweight} directive is instead recorded by an edge of nonzero weight.
The atom is then determined by the rules in \Cref{PingPongAlgorithm} and \Cref{ScanAlgorithm}.
\end{remark}

\subsection{Augment and graft}

We turn now to modifying the grafting and augmenting operations for distributed use.
These are the simplest operations, which lets us ease into the problem while devoting extra time to the general scaffolding we re-use in future operations.

Recall that each operation is enacted by a supervisor, which is spawned by a root blossom at the conclusion of a \texttt{scan}.
Each such supervisor operation follows the same outline:\label{supervisor-outline}

\begin{enumerate}
    \item Acquire (recursive) locks on the trees involved \cite{PetersonKaralekasAetherPaper}.
    This has the effect of preventing all locked blossoms from starting \texttt{scan}s.
    As part of locking, we additionally set the \texttt{pingable} flag to \texttt{false} to prevent locked blossoms from responding to pings, in order to avoid exposing incomplete state.
    If it is not possible to establish any of the locks, abort.
    \item Check the root blossoms advertised to the supervisor.
    If they are no longer unmatched roots, abort.
    \item 
    % By sending another \texttt{message-ping}, check that the edge responsible for sponsoring the action still sponsors the same action.
    Check that the sponsored action is still valid.
    For most actions (\texttt{graft}, \texttt{augment}, \texttt{contract}), this means sending another \texttt{message-ping} to check that the edge responsible for sponsoring the action still sponsors the same action.
    For \texttt{expand}, this means checking that the sponsoring blossom is still a negative macrovertex.
    We will defer discussion on checking the validity of \texttt{reweight} and \texttt{hold} to later sections.
    If the sponsored action is no longer valid, abort.\label{supervisor-outline-3}
    \item Enter the critical section.
    Send messages instructing the locked vertices to make the appropriate state changes.
    \item Exit the critical section, release the locks, restore ping responsiveness by setting the \texttt{pingable} flag to \texttt{true}, and terminate the supervisor process.
\end{enumerate}

For the \texttt{graft} operation, the supervisor is required to lock the tree from which it came, as well as both blossoms participating in the isolated matched edge to be grafted.%
\footnote{%
In our implementation, we refer to an isolated matched edge as a \define{barbell}.
Each blossom in a barbell considers itself to be the root.
}
The critical section is then constituted of three pairs of set instructions:
\begin{itemize}
  \item Set \texttt{positive?} on both blossoms in the matched edge.
  \item Set \texttt{parent} on both blossoms in the matched edge.
  \item Set \texttt{children} on both blossoms in the (unmatched) edge which sponsored the action.
\end{itemize}

\begin{algorithm}[t]
\SetAlgoLined
\KwData{a blossom process $B$, a \texttt{message-augment} $m$
% a preceding edge $e$,
% a dryad address $\delta$,
% a reply address $\alpha$
}
\KwResult{nothing}
 $e \leftarrow \texttt{preceding-edge($m$)}$\;
 $\delta  \leftarrow \texttt{dryad($B$)}$\;
 \uIf{$e = \texttt{match-edge($B$)}$}{
  $\texttt{match-edge($B$)} \leftarrow \texttt{parent($B$)}$\;
 }
 \Else{
  $\texttt{match-edge($B$)} \leftarrow e$\;
 }
 \uIf{$B$ is the tree root}{
  send a \texttt{message-sprout} to $\delta$\;
  send a ``done'' message to \texttt{addr($m$)}\;
 }
 \Else{
  $e \leftarrow \texttt{reverse(parent($B$))}$\;
  $\pi \leftarrow \texttt{target-blossom(parent($B$))}$\;
  send \texttt{message-augment(addr(}$m$\texttt{)}$,e$\texttt{)} to $\pi$\;
 }
%  \Return{}
\caption{\texttt{message-augment} handler}
\label{AugmentAlgorithm}
\end{algorithm}

The \texttt{augment} operation is more complex.
After locking both the source tree and target tree, the supervisor sends \texttt{message-augment} to the two blossoms in the sponsoring edge.
This message has a slot \texttt{addr} containing the supervisor's address and a slot \texttt{preceding-edge}, initially containing the sponsoring edge.
This causes the blossoms to modify their matches and forward the \texttt{message-augment} toward the roots of their respective trees; details are given in \Cref{AugmentAlgorithm}.\footnote{%
 We introduce here a function \texttt{reverse} to reverse the direction of an edge.
 Note that the supervisor needs to reverse the sponsoring edge before sending it via \texttt{message-augment} to the target tree.}
An additional twist is that, when releasing the recursive lock, we instruct all of the locked blossoms---not just the ones participating in the \texttt{message-augment} chain---to reset their \texttt{parent} and \texttt{children} fields.
This dissembles the tree structures, which are no longer alternating.

\subsection{Contract and expand blossom}\label{DistributedContractExpandSection}

We turn to even more complicated tree operations: first, the construction of macrovertices; and later, their dissolution.

As indicated in the previous section, macrovertex construction follows the same general template of locking and re-checking before entering the critical section, whose behavior we now specify.
To begin, the supervisor spawns a new blossom process which will serve as the macrovertex and immediately acquires a lock on it to prevent it from acting on its own (viz., initiating a \texttt{scan}).
It then takes the following steps:
\begin{description}
  \item[Petal calculation]
  The supervisor computes the paths in the alternating tree from the sponsoring blossoms to the root, trims edges which are common to both paths, and joins them through the sponsoring edge.
  This is the minimal alternating cycle containing the sponsoring edge, and is used to set the \texttt{petals} slot on the new macrovertex.%
  \footnote{%
  In \Cref{NonbipartiteFigure}, the edge $e = (v, x)$ would sponsor a macrovertex formation, and the two paths from the root would read as $rstuv$ and $rstwx$.
  After trimming the common prefix, one resulting alternating cycle is $tuvxw$.
  } The blossoms in the cycle set their \texttt{pistil} to the macrovertex.
  \item[Parent calculation]
  The final such edge trimmed above, if any, connects to the parent of the macrovertex.
  The supervisor substitutes the new macrovertex into that parent's \texttt{children}, and also sets the \texttt{parent} of the macrovertex.
  \item[Children calculation]
  Where appropriate, the blossoms in the cycle transfer the parent-child relationship of their children to the macrovertex.
  \item[Match calculation]
  The macrovertex inherits the match edge of its contracted blossom which is nearest to the root, if it is matched.
  The blossoms in the cycle erase their match edges.
  \item[Unpause]
  Finally, the macrovertex is unpaused.
\end{description}

\begin{remark}
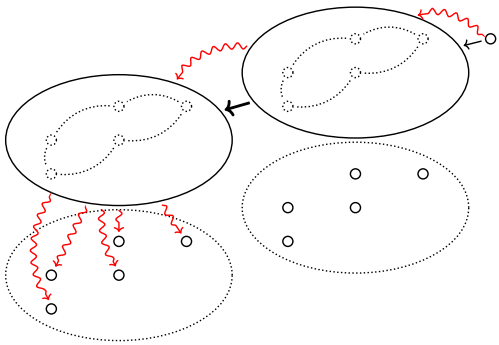
\begin{figure}
\vspace{0.2cm}
\resizebox{0.4\textwidth}{!}{
\begin{tikzpicture}[
  very thick,
  every node/.style={draw,circle},
  % mondenode/.style={fill=black!10!yellow},
  % archnode/.style={fill=black!30!cyan},
  % circletnode/.style={fill=black!30!orange},
  every fit/.style={ellipse,draw},
  shorten >= 3pt,shorten <= 3pt,
%   thin,
]

\begin{scope}[node distance=20mm]
% actual root
\node[color=white] (actualroot) {};
% cycle1
\node[left of=actualroot, yshift=-10mm] (cycle1arm2lower) {};
\node[left of=cycle1arm2lower] (cycle1arm2upper) {};
\node[left of=cycle1arm2lower, yshift=-10mm] (cycle1arm1lower) {};
\node[left of=cycle1arm1lower, yshift=-10mm] (cycle1arm1upper) {};
\node[left of=cycle1arm2upper, yshift=-10mm] (cycle1root) {};
% cycle2
\node[left of=cycle1root, yshift=-10mm, node distance=30mm] (cycle2arm2lower) {};
\node[left of=cycle2arm2lower] (cycle2arm2upper) {};
\node[left of=cycle2arm2lower, yshift=-10mm] (cycle2arm1lower) {};
\node[left of=cycle2arm1lower, yshift=-10mm] (cycle2arm1upper) {};
\node[left of=cycle2arm2upper, yshift=-10mm] (cycle2root) {};
% derived graph
\node[above of=actualroot, yshift=1cm] (root) {};
\node[draw, inner sep=2mm, fit=(cycle1arm2lower) (cycle1arm2upper) (cycle1root) (cycle1arm1upper) (cycle1arm1lower), dotted] (actualblossom1) {};
\node[draw, inner sep=2mm, fit=(cycle2arm2lower) (cycle2arm2upper) (cycle2root) (cycle2arm1upper) (cycle2arm1lower), dotted] (actualblossom2) {};
\node[draw, inner sep=2mm, fit=(cycle1arm2lower) (cycle1arm2upper) (cycle1root) (cycle1arm1upper) (cycle1arm1lower), yshift=4cm] (blossom1) {};
\node[draw, inner sep=2mm, fit=(cycle2arm2lower) (cycle2arm2upper) (cycle2root) (cycle2arm1upper) (cycle2arm1lower), yshift=4cm] (blossom2) {};
% derived cycles
\node[dotted, yshift=4cm, left of=actualroot, yshift=-10mm] (cycle1arm2lowerderived) {};
\node[dotted, yshift=4cm, left of=cycle1arm2lower] (cycle1arm2upperderived) {};
\node[dotted, yshift=4cm, left of=cycle1arm2lower, yshift=-10mm] (cycle1arm1lowerderived) {};
\node[dotted, yshift=4cm, left of=cycle1arm1lower, yshift=-10mm] (cycle1arm1upperderived) {};
\node[dotted, yshift=4cm, left of=cycle1arm2upper, yshift=-10mm] (cycle1rootderived) {};
\node[dotted, yshift=4cm, left of=cycle1root, yshift=-10mm, node distance=30mm] (cycle2arm2lowerderived) {};
\node[dotted, yshift=4cm, left of=cycle2arm2lower] (cycle2arm2upperderived) {};
\node[dotted, yshift=4cm, left of=cycle2arm2lower, yshift=-10mm] (cycle2arm1lowerderived) {};
\node[dotted, yshift=4cm, left of=cycle2arm1lower, yshift=-10mm] (cycle2arm1upperderived) {};
\node[dotted, yshift=4cm, left of=cycle2arm2upper, yshift=-10mm] (cycle2rootderived) {};
% cycle edges
\draw[dotted, bend right] (cycle1arm2lowerderived) to (cycle1arm2upperderived);
\draw[dotted, bend right] (cycle1arm2upperderived) to (cycle1rootderived);
\draw[dotted, bend right] (cycle1rootderived) to (cycle1arm1upperderived);
\draw[dotted, bend right] (cycle1arm1upperderived) to (cycle1arm1lowerderived);
\draw[dotted, bend right] (cycle1arm1lowerderived) to (cycle1arm2lowerderived);
\draw[dotted, bend right] (cycle2arm2lowerderived) to (cycle2arm2upperderived);
\draw[dotted, bend right] (cycle2arm2upperderived) to (cycle2rootderived);
\draw[dotted, bend right] (cycle2rootderived) to (cycle2arm1upperderived);
\draw[dotted, bend right] (cycle2arm1upperderived) to (cycle2arm1lowerderived);
\draw[dotted, bend right] (cycle2arm1lowerderived) to (cycle2arm2lowerderived);
% edges
\draw[->] (root) to (blossom1);
\draw[->, line width=2.4pt] (blossom1) to (blossom2);
% messages
\draw[red, ->, decorate, decoration={snake, pre length=3pt, post length=7pt}, bend right=30] (root) to (blossom1);
\draw[red, ->, decorate, decoration={snake, pre length=3pt, post length=7pt}, bend right=30] (blossom1) to (blossom2);
% \draw[->, decorate, decoration={snake, pre length=3pt, post length=7pt}] (blossom1) -- (cycle1root);
% \draw[->, decorate, decoration={snake, pre length=3pt, post length=7pt}, bend right] (blossom1) to (cycle1arm1upper);
% \draw[->, decorate, decoration={snake, pre length=3pt, post length=7pt}, bend right=15] (blossom1) to (cycle1arm1lower);
% \draw[->, decorate, decoration={snake, pre length=3pt, post length=7pt}] (blossom1) to (cycle1arm2upper);
% \draw[->, decorate, decoration={snake, pre length=3pt, post length=7pt}] (blossom1) -- (cycle1arm2lower);
\draw[red, ->, decorate, decoration={snake, pre length=3pt, post length=7pt}] (blossom2) -- (cycle2root);
\draw[red, ->, decorate, decoration={snake, pre length=3pt, post length=7pt}, bend right] (blossom2) to (cycle2arm1upper);
\draw[red, ->, decorate, decoration={snake, pre length=3pt, post length=7pt}, bend right=15] (blossom2) to (cycle2arm1lower);
\draw[red, ->, decorate, decoration={snake, pre length=3pt, post length=7pt}] (blossom2) to (cycle2arm2upper);
\draw[red, ->, decorate, decoration={snake, pre length=3pt, post length=7pt}] (blossom2) -- (cycle2arm2lower);
\end{scope}
\addtikzpadding;
\end{tikzpicture}
} % resizebox
\caption{%
Illustration of the propagation of \texttt{scan} messages, denoted by wavy red arrows, along an alternating tree with macrovertices.
Scan messages are forwarded first among blossoms along the edges that constitute the alternating tree, and then each positive macrovertex which receives a message propagates it to the blossoms which make up the cycle which it encloses.
}
\label{WarOfTheWorlds}
\end{figure}
In this way, alternating trees with macrovertices become ``two-dimensional trees'': there is a parent-child relation which captures potential alternating paths in the contracted graph, as well as a parent-child relation which captures the subsumption of blossoms by macrovertices.
We illustrate this structure in \Cref{WarOfTheWorlds}.
\end{remark}

Expansion is performed ``in reverse'': all of the relations that went into macrovertex formation are recovered whenever they can be, and they are reverted to a neutral state otherwise.
% As a sanity check in this direction, a macrovertex additionally checks that it has no internal weight, it is not contained within a further contracted cycle, and it resides in a negative position within the alternating tree.
The amounts to the following operations:
\begin{description}
    \item[External parent calculation]
    The parent of the macrovertex holds an edge to the macrovertex in its list of children.
    The supervisor ``unwraps'' the target of that edge by one layer: it replaces the edge by an edge with \texttt{target-blossom} set to the penultimate blossom ancestor of its \texttt{target-vertex}.
    It does this by asking \texttt{target-vertex} to recursively message its \texttt{pistil} to find the process whose \texttt{pistil} is equal to \texttt{target-blossom} (which, if \texttt{target-vertex} has only been part of one macrovertex contraction, would be itself).\footnote{%
By recursively message we mean send a message to its \texttt{pistil}, and then have that process send a message to its \texttt{pistil}, and so on.
 }

    \item[External match calculation]
    The match of the macrovertex holds a pointer to the macrovertex in its match edge.
    In the same fashion as external parent calculation, the supervisor descends that edge by one layer of macrovertex contraction.
    \item[Internal match calculation]
    The match edges within the cycle are set between neighbors so that the unique cycle blossom receiving an external match edge does not participate in a match within the cycle.
    \item[Internal parent and children calculation]
    The blossom algorithm makes two guarantees when expanding macrovertices: a macrovertex is always matched; and it either does not participate in an alternating tree (viz., when expanding a blossom at the termination of the algorithm) or, if it does, it lies in a negative position with zero internal weight (viz., when expanding a blossom during the bulk of the algorithm).
    In the second case, it is thus guaranteed to have at most one child (viz., its match) in addition to having a single parent.
    Accordingly, by choosing between clockwise and counterclockwise traversal, the supervisor can find a unique path through the petals which continues the alternating pattern from the ambient tree.
    These edges are assigned parent-child relationships, and all edges not along the path are ejected from the tree.
\end{description}

\subsection{Reweight}

Finally, we consider the most complex of the serial algorithm's tree operations.
Although reweighting only involves setting values internal to a tree, the \emph{validity} is influenced by an indeterminate number of trees.
It is not feasible to ensure the validity of a reweighting operation by acquiring locks, since pessimistically we would have to lock the entire platform.
Instead, we perform a tentative change, then undo (or ``rewind'') if that change is interrupted or seen to be invalid.%
\footnote{%
The simplest form of validation is to apply \Cref{AdjEdgeWeightsAreNonneg} and check for negatively-weighted edges.
}

The possibility of an invalid reweighting is a wrinkle unique to the distributed setting: a change can only become invalid if two trees, which are mutually influencing one another's \texttt{scan}s, both elect to reweight at the same time.
They can also only detect this invalidity by re-scanning---which, without intervention, would result in deadlock, as both trees refuse to reply to inbound pings in their respective critical sections.
In light of this, we introduce a new kind of ping and new state of ping responsiveness, called \define{soft pings}, which are used only to establish weight validity and not to sponsor new operations.

\begin{description}
    \item[\texttt{pingable}]
    This existing flag is extended from a simple on/off switch to three values: \texttt{all}, meaning the blossom responds to all pings (replaces \texttt{true}); \texttt{none}, meaning the blossom defers all ping responses (replaces \texttt{false}); and \texttt{soft}, meaning the blossom only handles soft pings and defers others.

\end{description}

With this new pingability mode in place, we can define how a supervisor checks the validity of a \texttt{reweight} (which we deferred during step 3 in the supervisor outline in Section~\hyperref[supervisor-outline-3]{3.3}).
First, the supervisor tells the source and target roots to set themselves and their trees to only respond to soft pings.%
\footnote{%
We additionally modify the validity checks for \texttt{graft}, \texttt{augment}, \texttt{contract} to do the same before sending their verification ping.
}
Then, the supervisor instructs the source root to initiate a \define{soft scan} (a \texttt{scan} that generates soft pings).
If the resulting recommendation differs from what was originally sponsored, the supervisor aborts.

This change is sufficient to maintain consistent state, but it can still result in livelock~\cite{ASHCROFT1975110}: two trees competing for the same reweighting operation in perfect synchrony can forever tentatively reweight, regret, rewind, and repeat.
To avoid this scenario, we use a priority scheme to break the symmetry: if a tree in the process of reweighting receives a (soft) ping from a higher-priority source which evidences that it has changed its weight by too much, it aborts its own reweight in order to make room for its superior to perform the operation instead.%
\footnote{%
It is possible, but not necessary, to employ this symmetry-breaking mechanism with the locks acquired by other operations.
}

With all this in mind, we describe steps involved in the critical section of the reweighting operation:

\begin{enumerate}
    \item Set the source tree's pingability to \texttt{none}.
    \item Modify the internal weights of the top-level blossoms in the source tree: the positive blossoms are increased by the desired amount, and negative blossoms are decreased.
    \item Set the source tree's pingability to \texttt{soft}.
    \item Tell the source root to perform a soft scan to determine the minimum edge weight emanating from this tree to another.
    This message contains the same arguments as described in Section \hyperref[ScanArguments]{3.2}, except \texttt{addr} is the supervisor's address.\label{ReweightSoftScan}
    \item If this minimum edge weight is negative, rewind.%
    \footnote{%
    This is the point at which one could only rewind halfway and check again.
    In the priority-preferenced scheme, only a higher-priority vertex or a vertex which has finalized its critical section can emit a negative-weight reply, and in both these cases we must rewind.
    }
\end{enumerate}

\begin{example}
See \Cref{ReweightConflictFigure} for an illustration of this resolution.
\end{example}

\begin{remark}
Rather than preventing livelock through priorities, an alternative strategy when two trees conflict through simultaneous reweighting is to have each tree rewind by half of the overshoot,%
\footnote{%
The factor of 1/2 comes from each 1 edge being attached to 2 vertices.
}
check again, and rewind the rest of the way if there is still a problem.
This is an imperfect form of resolution, but it nonetheless seems to be useful in practice.
See \Cref{ReweightConflictBySharingFigure} for an example of this conflict resolution.
\end{remark}

\begin{figure}
\begin{tikzpicture}[main/.style = {draw, circle}, node distance=20mm]
\node[main] (r) {$r$};
\node[main] (s) [right of=r] {$s$};
\node[main] (t) [right of=s] {$t$};
\node[main] (q) [right of=t] {$q$};

\draw[dotted] (r) -- (s) node[midway, fill=white]{$1$};
\draw[dotted] (s) -- (t) node[midway, fill=white]{$2$};
\draw[dotted] (t) -- (q) node[midway, fill=white]{$3$};
\addtikzpadding;
\end{tikzpicture}
\[\Downarrow\]
\begin{tikzpicture}[main/.style = {draw, circle}, node distance=20mm]
\node[main] (r) {$r$};
\node[red, main] (s) [right of=r] {$s, +1$};
\node[red, main] (t) [right of=s] {$t, +2$};
\node[main] (q) [right of=t] {$q$};

\draw[densely dotted] (r) -- (s) node[pos=0.58, fill=white]{$0$};
\draw[red, dotted] (s) -- (t) node[midway, fill=white]{$-1$};
\draw[dotted] (t) -- (q) node[pos=0.42, fill=white]{$1$};
\end{tikzpicture}
\[\Downarrow\]
\begin{tikzpicture}[main/.style = {draw, circle}, node distance=20mm]
\node[main] (r) {$r$};
\node[main] (s) [right of=r] {$s, +1$};
\node[red, main] (t) [right of=s] {$t$};
\node[main] (q) [right of=t] {$q$};

\draw[densely dotted] (r) -- (s) node[pos=0.58, fill=white]{$0$};
\draw[dotted] (s) -- (t) node[pos=0.44, fill=white]{$1$};
\draw[dotted] (t) -- (q) node[midway, fill=white]{$3$};
\end{tikzpicture}
\caption{%
A demonstration of a reweight conflict.
Two vertices $s$ and $t$ both attempt to reweight according to their local understanding of the largest value by which they can reweight.
Before exiting the critical section, they both notice that the edge connecting them has negative adjusted weight, which is illegal.
Taking $t$ to have lower priority than $s$, $t$ reverses its reweight whereas $s$ retains it.
This repairs the violation, and the algorithm has made progress by making the edge $(r,s)$ weightless.
}
\label{ReweightConflictFigure}
\end{figure}
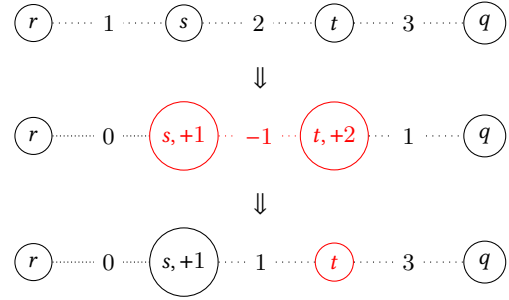

\begin{figure}
% \begin{tikzpicture}[main/.style = {draw, circle}, node distance=20mm]
% \node[main] (r) {$r$};
% \node[main] (s) [right of=r] {$s$};
% \node[main] (t) [right of=s] {$t$};
% \node[main] (q) [right of=t] {$q$};

% \draw[dotted] (r) -- (s) node[midway, fill=white]{$1$};
% \draw[dotted] (s) -- (t) node[midway, fill=white]{$2$};
% \draw[dotted] (t) -- (q) node[midway, fill=white]{$3$};
% \addtikzpadding;
% \end{tikzpicture}
% \[\Downarrow\]
% \begin{tikzpicture}[main/.style = {draw, circle}, node distance=20mm]
% \node[main] (r) {$r$};
% \node[red, main] (s) [right of=r] {$s, +1$};
% \node[red, main] (t) [right of=s] {$t, +2$};
% \node[main] (q) [right of=t] {$q$};

% \draw[densely dotted] (r) -- (s) node[midway, fill=white]{$0$};
% \draw[red, dotted] (s) -- (t) node[midway, fill=white]{$-1$};
% \draw[dotted] (t) -- (q) node[midway, fill=white]{$1$};
% \end{tikzpicture}
% \[\Downarrow\]
\begin{tikzpicture}[main/.style = {draw, circle}, node distance=20mm]
\node[main] (r) {$r$};
\node[red, main] (s) [right of=r] {$s, +1/2$};
\node[red, main] (t) [right of=s] {$t, +3/2$};
\node[main] (q) [right of=t] {$q$};

\draw[dotted] (r) -- (s) node[midway, fill=white]{$1/2$};
\draw[densely dotted] (s) -- (t) node[midway, fill=white]{$0$};
\draw[dotted] (t) -- (q) node[midway, fill=white]{$3/2$};
\end{tikzpicture}
\caption{%
Resolving the conflict in the second step in \Cref{ReweightConflictFigure} by partial rewinding instead.
This time the resulting weightless edge is $(s,t)$.
}
\label{ReweightConflictBySharingFigure}
\end{figure}
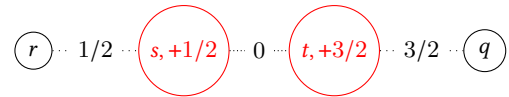

\subsection{Multireweight}\label{MultireweightSection}

Finally, we come to the crucial new feature that arises in the distributed algorithm.
The algorithm described so far suffers from an additional deadlock which arises not because of communication conflicts, but because of decentralized action.
Where the serial algorithm could exhaust the possibilities for a tree to act, \emph{release the active tree}, and move on to the next possible root, the decentralized algorithm builds multiple trees at once and \emph{has no release mechanism}.
Since the \texttt{hold} operation previously indicated that the serial algorithm should move on, this results in a possible failure mode when handling it in the distributed setting.

\begin{example}\label{AbutExample}
For an example configuration, see the second step in \Cref{AbutmentDeadlockFigure}.
These two trees are arranged so that they both have some positive weight on their negative vertices, and they both have positive vertices with weightless edges to each others' negative vertices.
This means that the trees abut along edges that sponsor \texttt{hold} operations, and neither one can perform any other operation, including a reweight.

However, the two trees \emph{could} make progress if they coordinated their action to simultaneously transfer weight from their negative vertices to their positive vertices.
This kind of coordinated reweighting is what is pictured in the transition to the third step of \Cref{AbutmentDeadlockFigure}.
\end{example}

To codify this kind of coordination, we introduce a variant of the reweighting operation, called a \define{multireweight}, which is triggered by the presence of a hold cluster:

\begin{definition}
A \define{hold cluster} is a set of trees whose \texttt{scan}s all elect to \texttt{hold} and all of whose edges sponsoring those \texttt{hold}s have both endpoints in the set.%
\footnote{%
In particular, there are no valid hold clusters of size $1$.
}
\end{definition}

\noindent
% A supervisor responding to a sponsored \texttt{hold} operation then checks whether its tree is actually part of a hold cluster.
% It does this by incrementally growing a set of trees along edges which sponsor \texttt{hold} operations.
% If no new trees are proposed, then this set is a minimal valid hold cluster; if a new tree is proposed that sponsors a non-\texttt{hold} operation, then the cluster is invalid and the supervisor aborts.
A supervisor responding to a sponsored \texttt{hold} operation deviates from the typical recipe described in \Cref{supervisor-outline}.
Namely, it performs the preflight validation checks for the operation (step 3) \textit{before} locking and checking the roots (steps 1 and 2).
For a \texttt{hold}, this validation check consists of verifying that the tree is actually part of a nontrivial hold cluster.
It does this by messaging all the roots in the sponsoring operation’s \texttt{root-bucket} to ask if they are held (and by whom), and aggregates the responses.
If new mutually-held trees (beyond the \texttt{root-bucket}) are encountered, the message is forwarded along to them.
This has the effect of incrementally growing a set of trees along edges which sponsor \texttt{hold} operations.
Once no new trees are proposed, then this set is a valid hold cluster; however, if a new tree is proposed that sponsors a non-\texttt{hold} operation, then the cluster is invalid and the supervisor aborts.

\begin{figure}
% \begin{tikzpicture}[main/.style = {draw, circle}, node distance=15mm]
% \node[main] (r) {$r$};
% \node[main] (u) [right of=r] {$u$};
% \node[main] (v) [right of=u] {$v$};
% \node[main] (s) [below of=r] {$s$};
% \node[main] (t) [left of=s] {$t$};
% \node[main] (q) [right of=s] {$q$};
% \draw[dotted] (r) -- (u) node[midway, fill=white]{$1$};
% \draw[dotted] (r) -- (s) node[midway, fill=white]{$1$};
% \draw[dotted] (s) -- (q) node[midway, fill=white]{$1$};
% \draw[dotted] (u) -- (q) node[midway, fill=white]{$1$};
% \draw[dotted] (t) -- (s) node[midway, fill=white]{$1$};
% \draw[dotted] (u) -- (v) node[midway, fill=white]{$1$};
% \draw[dotted] (s) -- (u);
% \draw[dotted] (r) -- (q) node[midway, fill=white]{$2$};
% \draw[dotted] (r) -- (t) node[midway, fill=white]{$2$};
% \draw[dotted] (v) -- (q) node[midway, fill=white]{$2$};
% \addtikzpadding;
% \end{tikzpicture}
% \[\Downarrow\]
\begin{tikzpicture}[main/.style = {draw, circle}, node distance=14mm]
\node[main] (r) {$r$};
\node[main] (u) [right of=r] {$u, +1$};
\node[main] (v) [right of=u] {$v$};
\node[main] (s) [below of=r] {$s, +1$};
\node[main] (t) [left of=s] {$t$};
\node[main] (q) [right of=s] {$q$};
\draw[densely dotted] (r) -- (u);
\draw[densely dotted] (r) -- (s);
\draw[densely dotted] (s) -- (q);
\draw[densely dotted] (u) -- (q);
\draw[very thick] (t) -- (s);
\draw[very thick] (u) -- (v);
\draw[densely dotted] (s) -- (u);
\draw[dotted] (r) -- (q) node[midway, fill=white]{$2$};
\draw[dotted] (r) -- (t) node[midway, fill=white]{$2$};
\draw[dotted] (v) -- (q) node[midway, fill=white]{$2$};
\addtikzpadding;
\end{tikzpicture}
\[\Downarrow\]
\begin{tikzpicture}[main/.style = {draw, circle}, node distance=14mm]
\node[main] (r) {$r$};
\node[main] (u) [right of=r] {$u, +1$};
\node[main] (v) [right of=u] {$v$};
\node[main] (s) [below of=r] {$s, +1$};
\node[main] (t) [left of=s] {$t$};
\node[main] (q) [right of=s] {$q$};
\draw[densely dotted] (r) -- (u);
\draw[->] (r) -- (s);
\draw[densely dotted] (s) -- (q);
\draw[->] (q) -- (u);
\draw[->, very thick] (s) -- (t);
\draw[->, very thick] (u) -- (v);
\draw[densely dotted] (s) -- (u);
\draw[dotted] (r) -- (q) node[midway, fill=white]{$2$};
\draw[dotted] (r) -- (t) node[midway, fill=white]{$2$};
\draw[dotted] (v) -- (q) node[midway, fill=white]{$2$};
\end{tikzpicture}
\[\Downarrow\]
\begin{tikzpicture}[main/.style = {draw, circle}, node distance=14mm]
\node[red, main] (r) {$r, +1$};
\node[main] (u) [right of=r] {$u$};
\node[red, main] (v) [right of=u] {$v, +1$};
\node[main] (s) [below of=r] {$s$};
\node[red, main] (t) [left of=s] {$t, +1$};
\node[red, main] (q) [right of=s] {$q, +1$};
\draw[densely dotted] (r) -- (u);
\draw[->] (r) -- (s);
\draw[densely dotted] (s) -- (q);
\draw[->] (q) -- (u);
\draw[->, very thick] (s) -- (t);
\draw[->, very thick] (u) -- (v);
\draw[dotted] (s) -- (u);
\draw[red, densely dotted] (r) -- (q) node[midway, fill=white]{$2|0$};

\draw[densely dotted] (v) -- (q);
\draw[densely dotted] (r) -- (t);
\end{tikzpicture}
\[\Downarrow\]
\begin{tikzpicture}[main/.style = {draw, circle}, node distance=14mm]
\node[main] (r) {$r, +1$};
\node[main] (u) [right of=r] {$u$};
\node[main] (v) [right of=u] {$v, +1$};
\node[main] (s) [below of=r] {$s$};
\node[main] (t) [left of=s] {$t, +1$};
\node[main] (q) [right of=s] {$q, +1$};
\draw[densely dotted] (r) -- (u);
\draw[densely dotted] (r) -- (s);
\draw[densely dotted] (s) -- (q);
\draw[densely dotted] (q) -- (u);
\draw[very thick] (t) -- (s);
\draw[very thick] (u) -- (v);
\draw[very thick] (r) -- (q);
\draw[dotted] (s) -- (u) node[midway, fill=white]{$2$};
\draw[densely dotted] (v) -- (q);
\draw[densely dotted] (r) -- (t);
\end{tikzpicture}
\caption{%
An example of a multireweight operation.
In this graph, the edge weights are determined by Manhattan distance.
% First, $s$ and $u$ act as roots, where they each reweight by $1$ and then respectively augment through $t$ and $v$ to enlarge the matching.
First, $r$ and $q$ act as roots, and they each graft the indicated matched edges onto their respective trees.
This causes a serial impasse: the only remaining weightless edge connects two negative vertices, and neither tree can reweight alone because of the weightless edge connecting its root to the other tree's negative vertex.
However, the two trees can together perform a multireweight by $1$, which redistributes the internal weight off of the negative vertices and onto the positive ones, resulting in a weightless edge connecting the two roots which can be augmented to produce a perfect matching.
% Finally, the algorithm augments through that new weightless edge to produce the desired perfect matching.
}
\label{AbutmentDeadlockFigure}
\end{figure}
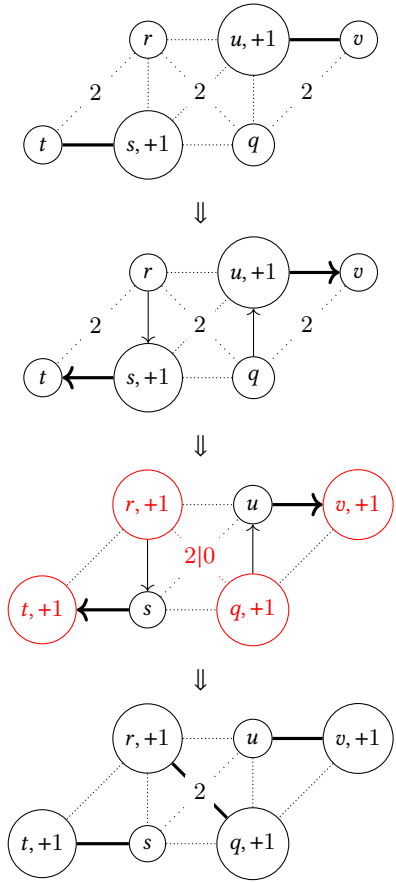

Once the supervisor has established a valid hold cluster, and if it was spawned by the root with the highest priority among the cluster, it proceeds through steps 1 and 2 of the standard recipe.
This locks all the trees in the hold cluster and then checks that their roots have not changed since the spawn of the supervisor, otherwise aborting.
Then, skipping step 3 (as we performed validation already), the supervisor reaches the \texttt{multireweight} critical section.
Before describing the way in which a supervisor enacts a \texttt{multireweight} operation, we must first (as promised in Section \hyperref[UnifyPongs2]{3.2}) make a modification to how \texttt{hold}s are processed by \texttt{unify-pongs}.
Rather than always treating it like a \texttt{pass} and preferring the other operation, \texttt{unify-pongs} only discards a \texttt{hold} when its root is part of the hold cluster, and otherwise considers it a nontrivial operation.

With this modification in place, we can finally describe the \texttt{multireweight} critical section:

\begin{enumerate}
    \item
    Set the pingability of the hold cluster to \texttt{soft}.
    \item
    Perform a soft scan across the entire hold cluster.
    For each root in the cluster, this soft scan resembles that of the \texttt{reweight} critical section in Section \hyperref[ReweightSoftScan]{3.5}, with one notable difference: the \texttt{hold-cluster} slot is populated with the hold cluster motivating our \texttt{multireweight}.
    This has the effect of treating the trees as if they were temporarily grafted onto a common root---the resulting sponsored action is guaranteed to be a \texttt{reweight}, with weight equal to half the distance between the closest trees in the cluster.%
    \footnote{%
Compare to the result in \Cref{PingPongAlgorithm} when two positive blossoms sharing the same root elect to \texttt{reweight}.
}
    \item
    Set the pingability of the hold cluster to \texttt{none}.
    \item
    Reweight the hold cluster by the weight from step 2.
    The supervisor interleaves the \texttt{reweight} operations for each tree in the cluster, which has the effect of reweighting the entire cluster simultaneously.
    \item
    Set the pingability of the hold cluster to \texttt{soft}.
    \item
    Perform another soft scan across the entire hold cluster, as in step 2, but no longer filling out the \texttt{hold-cluster} slot.
    This determines the minimum weight edge between the trees in the hold cluster and their surroundings.
    If this minimum edge weight is negative, rewind the hold cluster.
\end{enumerate}

% Once the supervisor has established a valid hold cluster, and if it was spawned by the root with the highest priority among the cluster, then the supervisor treats the trees as if they were temporarily grafted onto a common root.
% It performs a fresh \texttt{scan} over all the trees in the cluster with all of their roots included in the internal root set, which is then guaranteed to sponsor a reweight operation.%
% \footnote{%
% In \Cref{PingPongAlgorithm}, the blocking \texttt{hold} operations within the cluster become \text{pass}es, while the other reweighting rules maintain the feasibility of the modified edge weights.
% }
% The supervisor interleaves the steps in the reweight operations for each tree in the cluster, which has the effect of reweighting the entire cluster simultaneously, as illustrated in \Cref{AbutmentDeadlockFigure}.

\begin{example}
We demonstrate the \texttt{multireweight} operation in \Cref{AbutmentDeadlockFigure}. This operation is a consequence of constructing an alternating \emph{forest} rather than a lone alternating tree.
Starting from the first step of \Cref{AbutmentDeadlockFigure}, we illustrate how the serial algorithm would have progressed differently in \Cref{SerialAbutmentDeadlockFigure}.
\end{example}

\begin{figure}
% \begin{tikzpicture}[main/.style = {draw, circle}, node distance=15mm]
% \node[main] (r) {$r$};
% \node[main] (u) [right of=r] {$u, +1$};
% \node[main] (v) [right of=u] {$v$};
% \node[main] (s) [below of=r] {$s, +1$};
% \node[main] (t) [left of=s] {$t$};
% \node[main] (q) [right of=s] {$q$};
% \draw[densely dotted] (r) -- (u);
% \draw[densely dotted] (r) -- (s);
% \draw[densely dotted] (s) -- (q);
% \draw[densely dotted] (u) -- (q);
% \draw[very thick] (t) -- (s);
% \draw[very thick] (u) -- (v);
% \draw[densely dotted] (s) -- (u);
% \draw[dotted] (r) -- (q) node[midway, fill=white]{$2$};
% \draw[dotted] (r) -- (t) node[midway, fill=white]{$2$};
% \draw[dotted] (v) -- (q) node[midway, fill=white]{$2$};
% \end{tikzpicture}
% \[\Downarrow\]
\begin{tikzpicture}[main/.style = {draw, circle}, node distance=14mm]
\node[main] (r) {$r$};
\node[main] (u) [right of=r] {$u, +1$};
\node[main] (v) [right of=u] {$v$};
\node[main] (s) [below of=r] {$s, +1$};
\node[main] (t) [left of=s] {$t$};
\node[main] (q) [right of=s] {$q$};
\draw[->] (r) -- (u);
\draw[->] (r) -- (s);
\draw[densely dotted] (s) -- (q);
\draw[densely dotted] (u) -- (q);
\draw[->, very thick] (s) -- (t);
\draw[->, very thick] (u) -- (v);
\draw[densely dotted] (s) -- (u);
\draw[dotted] (r) -- (q) node[midway, fill=white]{$2$};
\draw[dotted] (r) -- (t) node[midway, fill=white]{$2$};
\draw[dotted] (v) -- (q) node[midway, fill=white]{$2$};
\addtikzpadding;
\end{tikzpicture}
\[\Downarrow\]
\begin{tikzpicture}[main/.style = {draw, circle}, node distance=14mm]
\node[main] (r) {$r, +1$};
\node[main] (u) [right of=r] {$u$};
\node[main] (v) [right of=u] {$v, +1$};
\node[main] (s) [below of=r] {$s$};
\node[main] (t) [left of=s] {$t, +1$};
\node[main] (q) [right of=s] {$q$};
\draw[->] (r) -- (u);
\draw[->] (r) -- (s);
\draw[dotted] (s) -- (q) node[midway, fill=white]{$1$};
\draw[dotted] (u) -- (q) node[midway, fill=white]{$1$};
\draw[->, very thick] (s) -- (t);
\draw[->, very thick] (u) -- (v);
\draw[dotted] (r) -- (q);
\draw[dotted] (s) -- (u) node[midway, fill=white]{$1$};
\draw[densely dotted] (r) -- (t);
\draw[dotted] (v) -- (q) node[pos=0.41, fill=white]{$1$};
\draw[densely dotted] (r) to [bend left] (v);
\end{tikzpicture}
\[\Downarrow\]
\begin{tikzpicture}[main/.style = {draw, circle}, node distance=14mm]
\node[main] (r) {$r, +1$};
\node[main] (u) [right of=r] {$u$};
\node[main] (v) [right of=u] {$v, +1$};
\node[main] (s) [below of=r] {$s$};
\node[main] (t) [left of=s] {$t, +1$};
\node[main] (q) [right=37mm of s] {$q, +1$};
\draw (r) to [bend right] (u);
\draw (r) to [bend left] (s);
\draw[densely dotted] (s) -- (q);
\draw[densely dotted] (u) -- (q);
\draw (s) to [bend left] (t);
\draw (u) to [bend right] (v);
\draw[very thick] (r) -- (q);
\draw (r) to [bend right] (t);
\draw[densely dotted] (v) -- (q);
\draw (r) to [bend left] (v);
\node[draw, inner sep=2mm, label=above left:$B_1$, fit=(r) (u) (v)] {};
\node[draw, inner sep=2mm, label=above left:$B_2$, fit=(r) (u) (v) (s) (t)] {};
\end{tikzpicture}
\[\Downarrow\]
% \begin{tikzpicture}[main/.style = {draw, circle}, node distance=15mm]
% \node[main] (r) {$r, +1$};
% \node[main] (u) [right of=r] {$u$};
% \node[main] (v) [right of=u] {$v, +1$};
% \node[main] (s) [below of=r] {$s$};
% \node[main] (t) [left of=s] {$t, +1$};
% \node[main] (q) [right of=s] {$q, +1$};
% \draw (r) to [bend right] (u);
% \draw[very thick] (s) -- (t);
% \draw (u) to [bend right] (v);
% \draw[very thick] (r) -- (q);
% \draw (r) to [bend left] (v);
% \node[red, draw, inner sep=2mm, label=above left:$V$, fit=(r) (u) (v)] {};
% \end{tikzpicture}
% \[\Downarrow\]
\begin{tikzpicture}[main/.style = {draw, circle}, node distance=14mm]
\node[main] (r) {$r, +1$};
\node[main] (u) [right of=r] {$u$};
\node[main] (v) [right of=u] {$v, +1$};
\node[main] (s) [below of=r] {$s$};
\node[main] (t) [left of=s] {$t, +1$};
\node[main] (q) [right of=s] {$q, +1$};
\draw[very thick] (s) -- (t);
\draw[very thick] (u) -- (v);
\draw[very thick] (r) -- (q);
\end{tikzpicture}
\caption{%
% The multireweight operation is a consequence of constructing an alternating \emph{forest} rather than a lone alternating tree.
% Starting from the partial matching constructed at the second step of \Cref{AbutmentDeadlockFigure}, we illustrate how the serial algorithm would have continued the matching, using $r$ as the only active root.
First, taking $r$ as the active root, $r$ grafts the two available match edges and reweights.
While $q$ is still disconnected in $G_\circ$, there are two cycles available from which it forms macrovertices $B_1$ and $B_2$.
In $G'$, $B_2$ (or, equivalently, $q$) can be reweighted, and $q$ can then form a match edge to any of $r$, $s$, $u$, and $v$.
Finally, expanding $B_2$ and $B_1$ lifts the perfect matching on $G'$ to one on $G$.
}
\label{SerialAbutmentDeadlockFigure}
\end{figure}
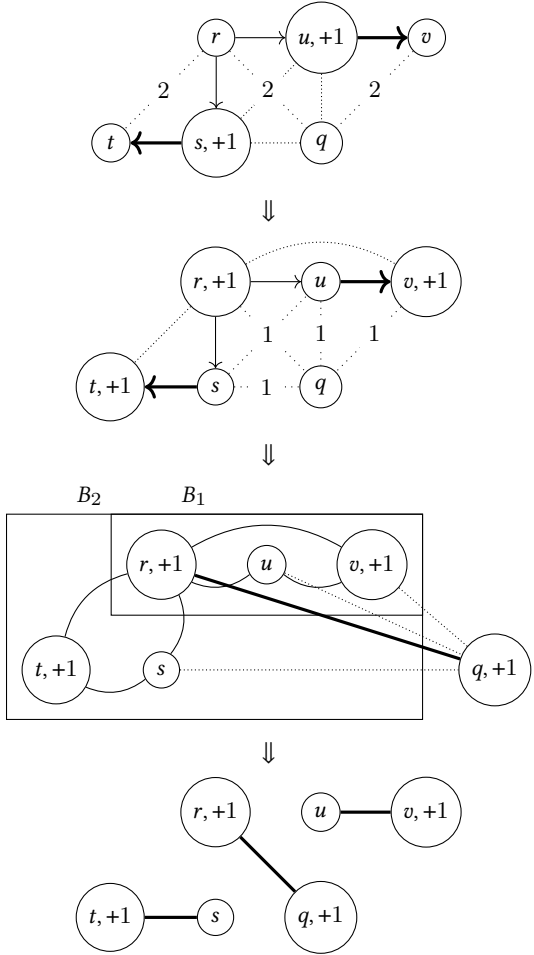

\subsection{The dryad main loop}

Finally, we consider the apparatus for starting and stopping the blossom algorithm.
The dryad, first mentioned in \Cref{DistributedEnvironmentSection} as a directory for the solver's various components, is a natural deposit for this responsibility.
The interface that it exposes between the solver and the outside world consists of the following messages:

\begin{description}
    \item[\texttt{message-sow}]  % external -> dryad
    Instructs the dryad to inject a new vertex into the solver.
    Carries an \texttt{id} which is used to refer uniquely to the vertex and to calculate edge weights in the graph.
    \item[\texttt{message-reap}]  % dryad -> external
    Sent by the dryad to \texttt{match-address} (a slot on the dryad) to announce a single matched edge in the solution, in the form of a pair of \texttt{id}s.
\end{description}

\begin{example}
We describe an implementation of a dryad with some major simplifying assumptions:
\begin{itemize}
    \item
    The dryad is ``monolithic'', meaning there is a single process responsible for servicing the API.
    \item
    The problem graph is complete, has an even number of vertices, and is fixed for the duration of the solver.
    Such a graph is guaranteed to have a perfect matching in which all vertices participate, simplifying the termination condition.
    \item
    Rather than provide a separate message which installs a plurality of vertices at once, we expect that the plurality of \texttt{message-sow}s arrive together.%
    \footnote{%
    This assumption has no effect  on the functionality of the algorithm if the entire problem is known at initialization time.
    }
\end{itemize}
On initialization, the dryad reads the pending \texttt{message-sow}s, spawns a process for each, and waits.
Each vertex process sends a \texttt{message-sprout} to the dryad when it joins the partial matching; once when every vertex makes that announcement, the solver has finished.
The dryad then queries the vertices for their match edges to announce the solution.
If any vertex replies that it does not have a match edge, it is because it is currently wrapped in a macrovertex.
Since the solver has finished, the dryad can safely send a directive for that macrovertex to expand, producing matches for all of its petals.
This eventually terminates, at which time the dryad announces the set of matched edges by emitting \texttt{message-reap}s.
\end{example}

\section{Analysis}

We now turn to the efficacy and efficiency of our variant of the blossom algorithm.

\subsection{Correctness}

As our algorithm is a modification of Edmonds's algorithm, our proof of correctness also hews closely to his, so we briefly recall the steps he takes.
He first analyzes the matching polytope so as to give a combinatorial recognition principle for minimum-weight maximum matchings~\cite[Section 4, Theorem (M)]{EdmondsMaximumMatching01}.
Namely, a matching $M = M_0$ on a graph $G = G_0$ qualifies when it can be extended to a sequence of graphs and matchings $\{M_j \subseteq G_j\}_{j=0}^n$ with the following properties:
\begin{itemize}
    \item Each $G_j$ is related to the next by a blossom contraction through an alternating cycle.
    \item Each $G_j$ is assigned ``admissible'' internal vertex weights.
    \item The vertex weights differ between $G_j$ and $G_{j+1}$ only at the new macrovertex.
    \item The contraction edges in each $G_j$ become weightless for these internal weights.
    \item All matched edges in $G_n$ are weightless, and all unmatched vertices in $G_n$ have zero internal weight.
\end{itemize}

To show these conditions suffice, Edmonds uses a polytope which encodes the maximum matching problem, with the following executive summary:

\begin{description}
    \item[{\cite[Section 2]{EdmondsMaximumMatching01}}]
    A matching gives rise to a vertex in this polytope, and a minimum-weight maximum matching is a polytope vertex which extremizes a linear functional encoding the edge weights.
    \item[{\cite[Section 5]{EdmondsMaximumMatching01}}]
    In general, polytope vertices can be modeled by sequences $\{M_j \subseteq G_j\}_j$ of the above type for \emph{some} choice of edge weights, hence are solutions to \emph{some} minimum-weight maximum matching problem.
    \item[{\cite[Section 6]{EdmondsMaximumMatching01}}]
    Given a polytope vertex extremizing a fixed edge weight functional, a sequence $\{M_j \subseteq G_j\}_j$ for that particular functional can be constructed.
    \item[{\cite[Section 7]{EdmondsMaximumMatching01}}]
    Finally, he provides a description of a primal-dual solver with verifiable progression toward such a combinatorial sequence---essentially, \Cref{SerialBlossomAlgorithm}.
\end{description}

\noindent
Since we are solving the same problem, we can reuse his recognition principle as-is.
In fact, we can reuse most of his proof that his algorithm produces the desired witnessing sequence of combinatorial steps: the sequence $\{M_j \subseteq G_j\}_j$ is read off from the steps in the pre-termination while loop from \Cref{SerialBlossomAlgorithm}, and with the same procedure we can read off such a sequence from the final state of the distributed algorithm.
We need only show that our algorithm makes progress toward this same goal.

\begin{figure}
\begin{center}
\begin{tikzpicture}[thick,
  every node/.style={draw,circle},
  mondenode/.style={fill=black!10!yellow},
  archnode/.style={fill=black!30!cyan},
  circletnode/.style={fill=black!30!orange},
  every fit/.style={ellipse,draw,inner sep=-2pt,text width=2cm},
  shorten >= 3pt,shorten <= 3pt,
  thin
]

\begin{scope}[node distance=30mm]
\node[circletnode] (circlet1) {};
\node[circletnode, left of=circlet1] (circlet2) {};
\node[circletnode, below right of=circlet1, yshift=15mm] (circlet3) {};
\node[circletnode, left of=circlet3] (circlet4) {};
\node[right of=circlet3, draw=none, node distance=10mm] {(circlet)};
\end{scope}

\begin{scope}[node distance=18mm]
\foreach \i in {1,2,3,4}
  \node[archnode, above of=circlet\i] (arch\i) {};
\node[right of=arch3, draw=none, node distance=10mm] {(arches)};
\end{scope}

\begin{scope}[node distance=18mm]
\foreach \i in {1,2,3,4}
  \node[mondenode, above of=arch\i] (monde\i) {};
\node[right of=monde3, draw=none, node distance=10mm] {(monde)};
\end{scope}

% the standard edges
\foreach \i in {1,2,3,4}
  \foreach \j in {1,2,3,4}
    \draw[densely dotted] (monde\i) -- (arch\j);

\foreach \i in {1,2,3,4}
  \draw[densely dotted] (circlet\i) -- (arch\i);

% edges in G
\draw[very thick] (circlet1) -- (circlet3);
\draw[densely dotted] (circlet3) -- (circlet4);
\draw[densely dotted] (circlet1) -- (circlet2);

% regal match edges
\draw[very thick] (circlet2) -- (arch2);
\draw[very thick] (circlet4) -- (arch4);
\draw[very thick] (arch1) -- (monde1);
\draw[very thick] (arch3) -- (monde3);
\addtikzpadding;
\end{tikzpicture}
\end{center}
\caption{
The crown on a four-vertex linear graph.
The bottom vertices form the (linear, four-vertex) circlet, the middle vertices and downward edges the arches, and the top vertices and upward edges the monde.
The matched edges depict a regal matching extending the matching consisting of just the rightmost circlet edge.
}\label{CrownExampleFigure}
\end{figure}
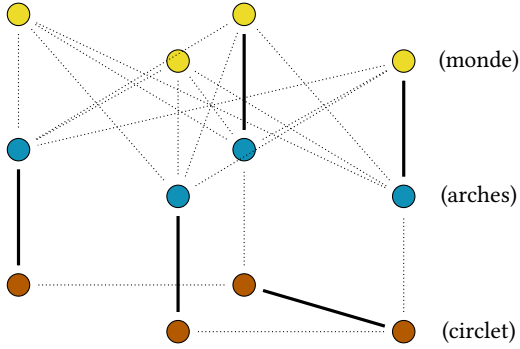

\subsubsection{Primal correctness}

For primal updates (i.e., for all steps save reweighting), we deduce correctness of the distributed algorithm by realizing its behavior as a special case of that of the serial algorithm.
Specifically, the primal updates in the distributed algorithm operating on a graph $G$ behave like the serial algorithm applied to a slightly larger graph $\Crown{G}$, which we call the \define{crown}.
We proceed as follows:
\begin{enumerate}
\item
We introduce normalizing conditions on the steps taken by the distributed and classical algorithms.
Essentially, a run of the distributed algorithm is said to be ``rooted'' when its tree operations do not interleave with one another, and a run of the classical algorithm is said to be ``good'' when it does not involve the new crown graph vertices ``too much''.
\item
We show that every run of either algorithm can be modified so that the respective normalization conditions hold: operations in the distributed algorithm can be judiciously delayed so as to become non-interleaved, and the classical algorithm can be discouraged from manipulating the new vertices in the crown graph without preventing progress.
\item
In the presence of the normalization conditions, we show that a rooted run of the distributed algorithm corresponds exactly to a good run of the classical algorithm.
We do this by directly comparing individual steps in one algorithm to individual steps in the other.
\item
We use this correspondence to transport Edmonds's witnessing sequence for a good run of the classical algorithm on $\Crown G$ to a witnessing sequence for a rooted run of the distributed algorithm on $G$.
Altogether, this establishes primal correctness.
\end{enumerate}

\begin{figure}[t!]
% \vspace{0.2cm}
\includegraphics[width=0.35\textwidth, trim={2cm 1.8cm 2cm 3cm}, clip]{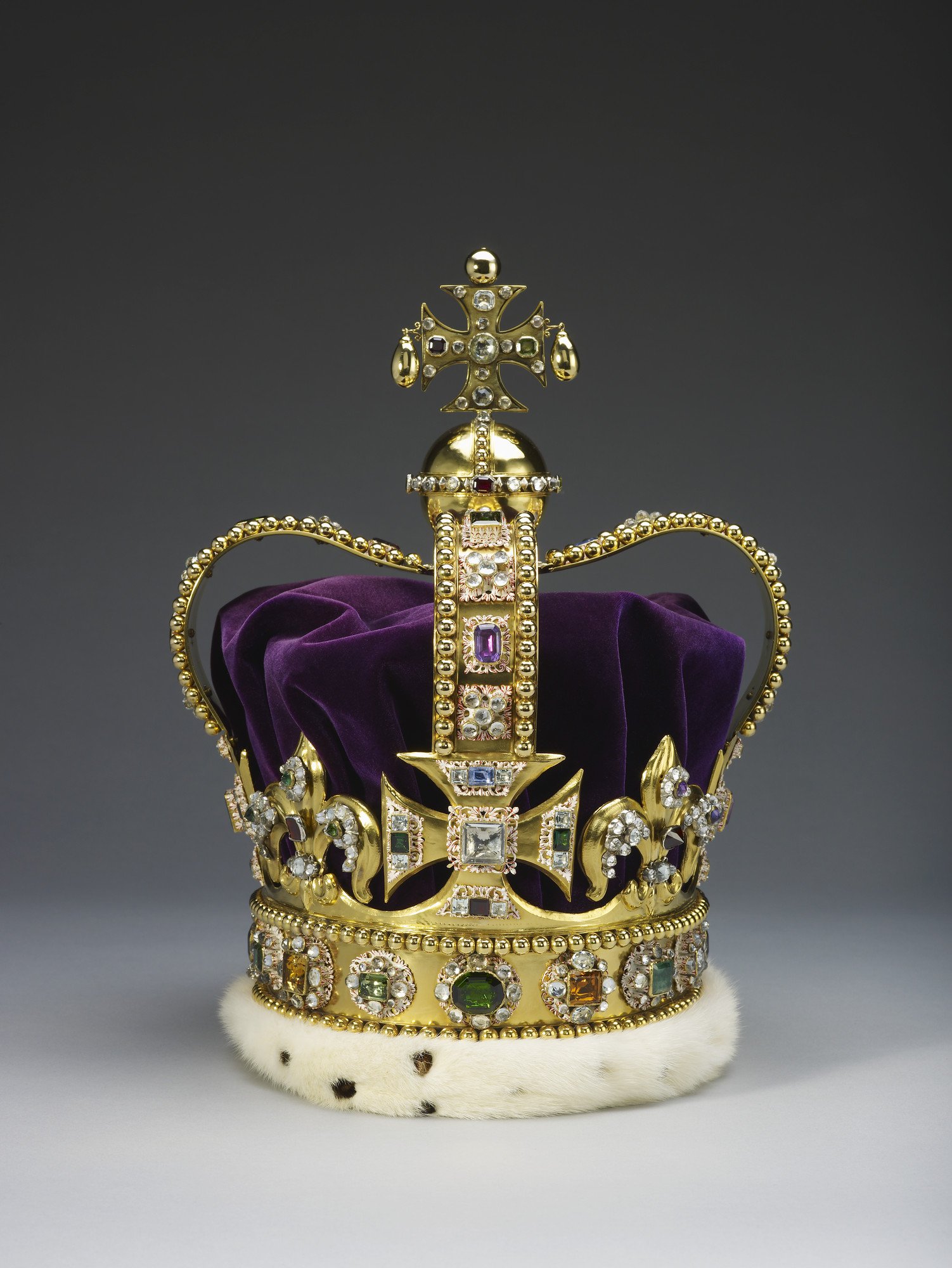}
\caption{St.\ Edward's crown~\cite{Crown} with spherical monde at top connected by arches to circlet at bottom.}\label{StEdwardsCrownFig}
\end{figure}

The idea behind the construction $\Crown{G}$ is to incorporate the distributed algorithm's supervisor processes as actual vertices in the problem graph.
This then enables us to identify the supervisor actions in the distributed algorithm with the behavior of trees rooted at these vertices in the serial algorithm.
The following definition makes this precise:

\begin{definition}\label{CrownDefn}
For an unweighted graph $G_\circ$, we define its \define{crown} $\Crown G_\circ$ as the following iterated pushout:
\begin{center}
\begin{tikzcd}[row sep=2mm]
& G^{\mathrm{disc}}_\circ \arrow{d} \arrow{r} & K_{G^{\mathrm{disc}}_\circ, G^{\mathrm{disc}}_\circ} \arrow{dd} \\
G^{\mathrm{disc}}_\circ \arrow{d} \arrow{r} & \operatorname{Cyl}(G^{\mathrm{disc}}_\circ) \arrow{rd} \\
G_\circ \arrow{rr} & & \Crown{G}_\circ,
\end{tikzcd}
\end{center}
where $G^{\mathrm{disc}}_\circ$ is the discretization, and \[\operatorname{Cyl}(-) = (-) \square \{0 \to 1\}\] is the \define{cylinder graph}, constructed using the graph ``Cartesian product''.
Concretely, the vertices of $\Crown G_\circ$ partition into three sets, each separately isomorphic to the vertices of $G_\circ$: the \define{circlet}, the \define{arches}, and the \define{monde}. For a vertex $v \in G_\circ$, we refer to the corresponding vertices in the circlet, arch, and monde as its \define{avatars}.
The edges joining these regions are as follows:
\begin{description}
  \item[Arches--arches:] None.
  \item[Arches--circlet:] Each pair of avatars of $v$ is joined by an edge.
  \item[Arches--monde:] Every arch vertex is joined to every monde vertex by an edge.
  \item[Circlet--circlet:] These are the same as the edges of $G_\circ$.
  \item[Circlet--monde:] None.
  \item[Monde--monde:] None.
\end{description}
\end{definition}

\begin{example}
\Cref{CrownExampleFigure} depicts the crown $\Crown G_\circ$ of a four-vertex linear graph $G_\circ$.
The circlet, arches, and monde are vertically arranged.
Additionally, we illustrate a regal matching on $\Crown G_\circ$ (see \Cref{RegalDefn}) which extends a partial matching on $G_\circ$ consisting of a single edge.
In \Cref{StEdwardsCrownFig} we provide the inspiration behind the naming of $\Crown G$, and
in \Cref{PushoutExampleFigure} we separately indicate how the crown is constructed as an iterated pushout.
\end{example}

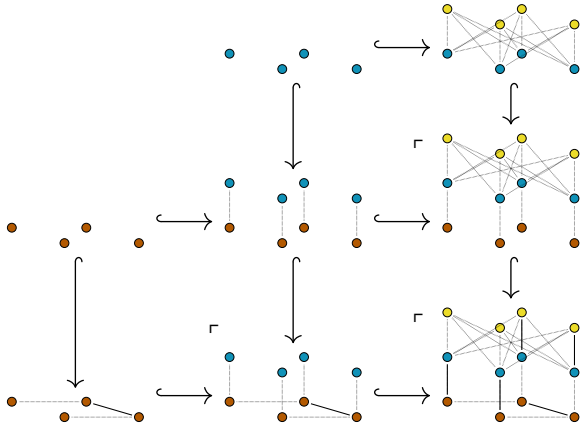
\begin{figure}
\begin{center}
\begin{tikzcd}
&
% G^disc arches
\scalebox{0.33}{
\begin{tikzpicture}[thick,
  every node/.style={draw,circle},
  mondenode/.style={fill=black!10!yellow},
  archnode/.style={fill=black!30!cyan},
  circletnode/.style={fill=black!30!orange},
  every fit/.style={ellipse,draw,inner sep=-2pt,text width=2cm},
  shorten >= 3pt,shorten <= 3pt,
  thin
]

\begin{scope}[node distance=30mm]
\node[archnode] (circlet1) {};
\node[archnode, left of=circlet1] (circlet2) {};
\node[archnode, below right of=circlet1, yshift=15mm] (circlet3) {};
\node[archnode, left of=circlet3] (circlet4) {};
\end{scope}
\end{tikzpicture}
}
\arrow[hook]{r}
\arrow[hook]{d}

&

% connected graph
\scalebox{0.33}{
\begin{tikzpicture}[thick,
  every node/.style={draw,circle},
  mondenode/.style={fill=black!10!yellow},
  archnode/.style={fill=black!30!cyan},
  circletnode/.style={fill=black!30!orange},
  every fit/.style={ellipse,draw,inner sep=-2pt,text width=2cm},
  shorten >= 3pt,shorten <= 3pt,
  thin
]

\begin{scope}[node distance=30mm]
\node[archnode] (circlet1) {};
\node[archnode, left of=circlet1] (circlet2) {};
\node[archnode, below right of=circlet1, yshift=15mm] (circlet3) {};
\node[archnode, left of=circlet3] (circlet4) {};
\end{scope}

\begin{scope}[node distance=18mm]
\foreach \i in {1,2,3,4}
  \node[mondenode, above of=circlet\i] (monde\i) {};
\end{scope}

% the standard edges
\foreach \i in {1,2,3,4}
  \foreach \j in {1,2,3,4}
    \draw[densely dotted] (monde\i) -- (circlet\j);
\end{tikzpicture}
}
\arrow[hook]{d}

\\

% G^disc circlet
\scalebox{0.33}{
\begin{tikzpicture}[thick,
  every node/.style={draw,circle},
  mondenode/.style={fill=black!10!yellow},
  archnode/.style={fill=black!30!cyan},
  circletnode/.style={fill=black!30!orange},
  every fit/.style={ellipse,draw,inner sep=-2pt,text width=2cm},
  shorten >= 3pt,shorten <= 3pt,
  thin
]

\begin{scope}[node distance=30mm]
\node[circletnode] (circlet1) {};
\node[circletnode, left of=circlet1] (circlet2) {};
\node[circletnode, below right of=circlet1, yshift=15mm] (circlet3) {};
\node[circletnode, left of=circlet3] (circlet4) {};
\end{scope}
\end{tikzpicture}
}
\arrow[hook]{r}
\arrow[hook]{d}

&

% Cyl(G^disc)
\scalebox{0.33}{
\begin{tikzpicture}[thick,
  every node/.style={draw,circle},
  mondenode/.style={fill=black!10!yellow},
  archnode/.style={fill=black!30!cyan},
  circletnode/.style={fill=black!30!orange},
  every fit/.style={ellipse,draw,inner sep=-2pt,text width=2cm},
  shorten >= 3pt,shorten <= 3pt,
  thin
]

\begin{scope}[node distance=30mm]
\node[circletnode] (circlet1) {};
\node[circletnode, left of=circlet1] (circlet2) {};
\node[circletnode, below right of=circlet1, yshift=15mm] (circlet3) {};
\node[circletnode, left of=circlet3] (circlet4) {};
\end{scope}

\begin{scope}[node distance=18mm]
\foreach \i in {1,2,3,4}
  \node[archnode, above of=circlet\i] (arch\i) {};
\end{scope}

\foreach \i in {1,2,3,4}
  \draw[densely dotted] (circlet\i) -- (arch\i);
\end{tikzpicture}
}
\arrow[hook]{r}
\arrow[hook]{d}

&

% upper right pushout
\scalebox{0.33}{
\begin{tikzpicture}[thick,
  every node/.style={draw,circle},
  mondenode/.style={fill=black!10!yellow},
  archnode/.style={fill=black!30!cyan},
  circletnode/.style={fill=black!30!orange},
  every fit/.style={ellipse,draw,inner sep=-2pt,text width=2cm},
  shorten >= 3pt,shorten <= 3pt,
  thin
]

\begin{scope}[node distance=30mm]
\node[circletnode] (circlet1) {};
\node[circletnode, left of=circlet1] (circlet2) {};
\node[circletnode, below right of=circlet1, yshift=15mm] (circlet3) {};
\node[circletnode, left of=circlet3] (circlet4) {};
\end{scope}

\begin{scope}[node distance=18mm]
\foreach \i in {1,2,3,4}
  \node[archnode, above of=circlet\i] (arch\i) {};
\end{scope}

\begin{scope}[node distance=18mm]
\foreach \i in {1,2,3,4}
  \node[mondenode, above of=arch\i] (monde\i) {};
\end{scope}

% the standard edges
\foreach \i in {1,2,3,4}
  \foreach \j in {1,2,3,4}
    \draw[densely dotted] (monde\i) -- (arch\j);

\foreach \i in {1,2,3,4}
  \draw[densely dotted] (circlet\i) -- (arch\i);
\end{tikzpicture}
}
\arrow[hook]{d}
\arrow[ul, phantom, "\ulcorner", very near start]

\\

% G
\scalebox{0.33}{
\begin{tikzpicture}[thick,
  every node/.style={draw,circle},
  mondenode/.style={fill=black!10!yellow},
  archnode/.style={fill=black!30!cyan},
  circletnode/.style={fill=black!30!orange},
  every fit/.style={ellipse,draw,inner sep=-2pt,text width=2cm},
  shorten >= 3pt,shorten <= 3pt,
  thin
]

\begin{scope}[node distance=30mm]
\node[circletnode] (circlet1) {};
\node[circletnode, left of=circlet1] (circlet2) {};
\node[circletnode, below right of=circlet1, yshift=15mm] (circlet3) {};
\node[circletnode, left of=circlet3] (circlet4) {};
\end{scope}

\draw[very thick] (circlet1) -- (circlet3);
\draw[densely dotted] (circlet3) -- (circlet4);
\draw[densely dotted] (circlet1) -- (circlet2);
\end{tikzpicture}
}
\arrow[hook]{r}

&

% lower left pushout
\scalebox{0.33}{
\begin{tikzpicture}[thick,
  every node/.style={draw,circle},
  mondenode/.style={fill=black!10!yellow},
  archnode/.style={fill=black!30!cyan},
  circletnode/.style={fill=black!30!orange},
  every fit/.style={ellipse,draw,inner sep=-2pt,text width=2cm},
  shorten >= 3pt,shorten <= 3pt,
  thin
]

\begin{scope}[node distance=30mm]
\node[circletnode] (circlet1) {};
\node[circletnode, left of=circlet1] (circlet2) {};
\node[circletnode, below right of=circlet1, yshift=15mm] (circlet3) {};
\node[circletnode, left of=circlet3] (circlet4) {};
\end{scope}

\begin{scope}[node distance=18mm]
\foreach \i in {1,2,3,4}
  \node[archnode, above of=circlet\i] (arch\i) {};
\end{scope}

% the standard edges
\foreach \i in {1,2,3,4}
  \draw[densely dotted] (circlet\i) -- (arch\i);

% edges in G
\draw[very thick] (circlet1) -- (circlet3);
\draw[densely dotted] (circlet3) -- (circlet4);
\draw[densely dotted] (circlet1) -- (circlet2);
\end{tikzpicture}
}
\arrow[hook]{r}
\arrow[ul, phantom, "\ulcorner", very near start]

&

% full picture
\scalebox{0.33}{
\begin{tikzpicture}[thick,
  every node/.style={draw,circle},
  mondenode/.style={fill=black!10!yellow},
  archnode/.style={fill=black!30!cyan},
  circletnode/.style={fill=black!30!orange},
  every fit/.style={ellipse,draw,inner sep=-2pt,text width=2cm},
  shorten >= 3pt,shorten <= 3pt,
  thin
]

\begin{scope}[node distance=30mm]
\node[circletnode] (circlet1) {};
\node[circletnode, left of=circlet1] (circlet2) {};
\node[circletnode, below right of=circlet1, yshift=15mm] (circlet3) {};
\node[circletnode, left of=circlet3] (circlet4) {};
\end{scope}

\begin{scope}[node distance=18mm]
\foreach \i in {1,2,3,4}
  \node[archnode, above of=circlet\i] (arch\i) {};
\end{scope}

\begin{scope}[node distance=18mm]
\foreach \i in {1,2,3,4}
  \node[mondenode, above of=arch\i] (monde\i) {};
\end{scope}

% the standard edges
\foreach \i in {1,2,3,4}
  \foreach \j in {1,2,3,4}
    \draw[densely dotted] (monde\i) -- (arch\j);

\foreach \i in {1,2,3,4}
  \draw[densely dotted] (circlet\i) -- (arch\i);

% edges in G
\draw[very thick] (circlet1) -- (circlet3);
\draw[densely dotted] (circlet3) -- (circlet4);
\draw[densely dotted] (circlet1) -- (circlet2);

% regal match edges
\draw[very thick] (circlet2) -- (arch2);
\draw[very thick] (circlet4) -- (arch4);
\draw[very thick] (arch1) -- (monde1);
\draw[very thick] (arch3) -- (monde3);
\end{tikzpicture}
}
\arrow[ul, phantom, "\ulcorner", very near start]
\end{tikzcd}
\end{center}
\caption{
The presentation of the crown on a four-vertex linear graph as an iterated pushout, including the two intermediate pushouts.
Each square describes its lower-right corner as two graphs, those in the upper-right and lower-left, glued together along the common subgraph in the upper-left.
}\label{PushoutExampleFigure}
\end{figure}

We show that these conditions permit the following correspondence, whose proof we briefly defer:

\begin{theorem}\label{UnweightedOrderBijection}
Rooted Lamport orderings of the distributed blossom algorithm acting on $G_\circ$ biject with good runs of the serial blossom algorithm acting on $\Crown G_\circ$, modulo the choice of monde vertex at the start of each serial segment.
This bijection preserves the sequence of state updates.
\end{theorem}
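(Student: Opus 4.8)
The plan is to construct the bijection by induction along the common sequence of state updates, maintaining throughout the invariant that the circlet subgraph of $\Crown G_\circ$, together with its induced matching and its induced (two-dimensional) alternating-tree structure, is \emph{literally} the state of $G_\circ$ in the distributed run after the same number of updates. The regal matching of \Cref{RegalDefn} is arranged precisely so that the only unmatched vertices of $\Crown G_\circ$ are the monde avatars of the unmatched circlet vertices, and so that every arch is matched either up to its monde (when its circlet avatar is matched in $G_\circ$) or down to its circlet avatar (when that avatar is free in $G_\circ$). These free mondes are therefore the only legal serial roots, and the arch layer is exactly the reconstructible scaffold that lets a monde-rooted serial tree reach into the circlet.

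First I would fix the translation of a single \emph{serial segment}, i.e.\ the serial algorithm's processing of one monde root. A segment begins by selecting a free monde $m$ (this selection is the degree of freedom quotiented out by the ``modulo'' clause) and grafting the matched arch edge $a$--$c$ along the unmatched edge $m$--$a$; since every monde is adjacent to every arch, \emph{any} free monde can descend through the appropriate arch onto \emph{any} prescribed free circlet vertex $c$. The resulting positive circlet vertex $c$ is the root of the distributed tree being modeled, and thereafter the serial grafts and contractions occur entirely inside the circlet. I would check, edge class by edge class against \Cref{CrownDefn}, that the scaffold is inert: because there are no arch--arch, monde--monde, or circlet--monde edges, no alternating cycle can enter the arch or monde layers and no contraction can involve them, so contract and graft updates change only the circlet and agree step-for-step with the distributed supervisor's critical sections described in \Cref{DistributedContractExpandSection}. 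A serial \texttt{augment} is the one update that legitimately touches the scaffold: its augmenting path runs $m_1$--$a_1$--$c_1$--$\cdots$--$c_2$--$a_2$--$m_2$ between two free mondes, and I would verify that inverting this path reproduces on the circlet exactly the distributed augment that matches the two previously-unmatched roots $c_1$ and $c_2$, while the monde/arch endpoints merely settle back into a regal configuration for the new matching.

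Next I would assemble these segment-level correspondences into the two directions of the bijection. Given a rooted Lamport ordering, its operations are already totally ordered and atomic; I would read them off in order, opening a new serial segment each time the distributed algorithm seeds a tree at a fresh root and emitting the matching serial update for each subsequent distributed graft, contract, expand, and augment, with a free monde chosen at each segment start. An augment between two distributed trees becomes a single serial \texttt{augment} within the driving tree's segment whose path terminates at the partner tree's root monde. The run so produced satisfies the goodness conditions by construction. Conversely, from a good serial run the goodness hypothesis guarantees that each segment consists of a single monde-descent followed by circlet-only operations and a scaffold-respecting augment, so deleting the monde and arch vertices from each step recovers a unique distributed operation and, reassembled, a rooted ordering. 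That these assignments are mutually inverse up to the per-segment monde choice, and that they preserve the ordered list of circlet/$G_\circ$ state updates, then follows from the maintained invariant.

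\textbf{The main obstacle} I expect is pinning down the goodness condition so that it is simultaneously weak enough to be achievable by the normalization of the second step of the program and strong enough to outlaw every serial behavior with no distributed counterpart. Concretely, I must rule out a good run contracting a blossom whose cycle threads an arch, augmenting along a path that visits two mondes without descending into the circlet between them, or using a monde as an interior rather than root tree vertex; each is excluded by the sparsity of the scaffold in \Cref{CrownDefn}, but turning ``the scaffold is sparse'' into ``every good segment is a single descent enacting one distributed operation'' requires an exhaustive case analysis over the six edge classes of the crown, together with a matching argument showing that serial segments which fail to augment correspond exactly to distributed trees forced to \texttt{hold} or \texttt{pass}, with their grafted structure persisting so as to remain available to a later segment's augmenting path.
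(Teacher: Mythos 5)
Your scaffolding---segment-by-segment translation, one free monde per segment, the invariant that the circlet literally carries the distributed state---agrees with the paper's proof. The fatal divergence is in how you translate a distributed augment, and it breaks both directions of the claimed bijection. Because you keep the two distributed trees $T_1$, $T_2$ under \emph{separate} monde roots, the sponsoring edge $e$ joins two different serial trees, and you are forced to model the augment as a single serial augment along $m_1$--$a_1$--$c_1$--$\cdots$--$v_1$--$v_2$--$\cdots$--$c_2$--$a_2$--$m_2$. For the serial algorithm to execute this, its tree rooted at $m_1$ must first graft its way in reverse through $T_2$'s matched edges until it reaches $c_2$, and must then graft the matched edge $c_2$--$a_2$, making the arch $a_2$ a positive child of the circlet vertex $c_2$. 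That graft is exactly what the directedness clause of \Cref{GoodRunDefn} forbids, so your forward map produces runs that are not good, and the theorem's bijection with \emph{good} runs is not established. The reverse direction fails symmetrically: in a downward-directed tree no arch can ever occupy a positive position (arches enter only as negative children of monde vertices), so a good run can never contain your augmenting path, and your claim that good segments end in a ``scaffold-respecting augment'' of this shape describes behavior good runs cannot exhibit. The paper's resolution is different: it grafts \emph{all} roots participating in a segment onto a \emph{single} monde root, so that $e$ joins two positive vertices of the same serial tree and therefore sponsors a contraction through the monde root; the resulting macrovertex contains arch vertices, hence has an edge to a free monde, and the distributed augment becomes the serial triple contract--augment--expand. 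This is precisely the pattern legislated by the second clause of \Cref{GoodRunDefn}.

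Relatedly, your ``main obstacle'' paragraph inverts the real situation: you propose to rule out good runs that contract a blossom whose cycle threads an arch, but such blossoms are the \emph{only} mechanism by which good runs augment; and the exclusions you want are not consequences of the crown's sparsity---nothing in \Cref{CrownDefn} stops a circlet vertex from grafting its own arch---they are imposed by fiat in \Cref{GoodRunDefn}. What your direct-path construction does essentially prove is the second objective of the paper's lemma on extending partial good runs (an augmenting path that violates directedness can be rerouted through a monde-rooted blossom), but as a proof of \Cref{UnweightedOrderBijection} it establishes a correspondence with the wrong class of serial runs.
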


\noindent
To support this claim, we show in some Lemmas that these conditions are not so limiting.
Our first Lemma shows that by deferring irrelevant grafting operations, we can benignly reordering the events in a Lamport ordering so that the ordering becomes rooted.

To make precise the relationship between matchings on $\Crown{G}_\circ$ and matchings on $G_\circ$, we need three auxiliary definitions which help normalize the indeterminacy on both sides of our purported correspondence.
We begin with a condition on distributed runs:

\begin{definition}\label{RootedOrderingDefn}
In a Lamport ordering~\cite{Lamport} of the events of a run of the distributed algorithm, we refer to the sequence of grafting events between two adjacent state updates, as well as the later state update, as a \define{segment}.
A segment is said to be \define{rooted} if all of its roots belong to the set of roots which act in its final bookending state update.
The entire Lamport ordering is said to be \define{rooted} if all of its segments are rooted.
\end{definition}

\noindent
Our intent with \Cref{RootedOrderingDefn} is that rooted orderings display a kind of focused attention.
Rather than many tree operations happening in parallel in a forest, all the operations within a maximally reasonable block of time pertain to one tree only.

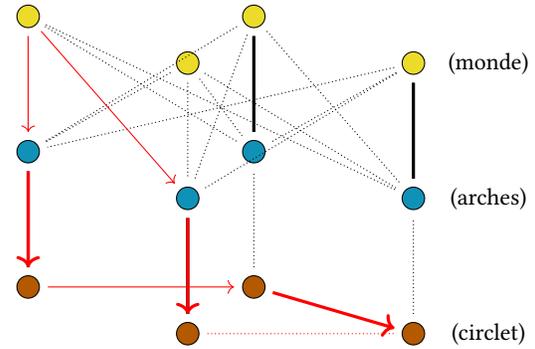
\begin{figure}
\begin{center}
\begin{tikzpicture}[thick,
  every node/.style={draw,circle},
  mondenode/.style={fill=black!10!yellow},
  archnode/.style={fill=black!30!cyan},
  circletnode/.style={fill=black!30!orange},
  every fit/.style={ellipse,draw,inner sep=-2pt,text width=2cm},
  shorten >= 3pt,shorten <= 3pt,
  thin
]

\begin{scope}[node distance=30mm]
\node[circletnode] (circlet1) {};
\node[circletnode, left of=circlet1] (circlet2) {};
\node[circletnode, below right of=circlet1, yshift=15mm] (circlet3) {};
\node[circletnode, left of=circlet3] (circlet4) {};
\node[right of=circlet3, draw=none, node distance=10mm] {(circlet)};
\end{scope}

\begin{scope}[node distance=18mm]
\foreach \i in {1,2,3,4}
  \node[archnode, above of=circlet\i] (arch\i) {};
\node[right of=arch3, draw=none, node distance=10mm] {(arches)};
\end{scope}

\begin{scope}[node distance=18mm]
\foreach \i in {1,2,3,4}
  \node[mondenode, above of=arch\i] (monde\i) {};
\node[right of=monde3, draw=none, node distance=10mm] {(monde)};
\end{scope}

% the standard edges
\foreach \i in {1,3,4}
  \foreach \j in {1,2,3,4}
    \draw[densely dotted] (monde\i) -- (arch\j);

\draw[densely dotted] (monde2) -- (arch1);
\draw[densely dotted] (monde2) -- (arch3);

\foreach \i in {1,2,3,4}
  \draw[densely dotted] (circlet\i) -- (arch\i);

% edges in G
% \draw[very thick] (circlet1) -- (circlet3);
% \draw[densely dotted] (circlet3) -- (circlet4);
% \draw[densely dotted] (circlet1) -- (circlet2);

% regal match edges
\draw[very thick] (circlet2) -- (arch2);
\draw[very thick] (circlet4) -- (arch4);
\draw[very thick] (arch1) -- (monde1);
\draw[very thick] (arch3) -- (monde3);

% alternating tree edges
\draw[red,->] (monde2) -- (arch2);
\draw[red,->,very thick] (arch2) -- (circlet2);
\draw[red,->] (circlet2) -- (circlet1);
\draw[red,->,very thick] (circlet1) -- (circlet3);
\draw[red,->] (monde2) -- (arch4);
\draw[red,->,very thick] (arch4) -- (circlet4);
\draw[red, densely dotted] (circlet4) -- (circlet3);
\addtikzpadding;
\end{tikzpicture}
\end{center}
\caption{
An alternating tree directed along the crown.
}\label{CrownExampleFigureWithTree}
\end{figure}

Next, we have conditions for the serial runs on the crown:

\begin{definition}\label{RegalDefn}
A matching on $\Crown G_\circ$ is said to be \define{regal} if all of the vertices in the circlet and arches are matched (but perhaps not to each other, i.e., some vertices in the arches may be matched to vertices in the monde).
The \define{initial regal matching} is the matching where each vertex in the circlet is matched to its avatar in the arches.
\end{definition}

\begin{definition}\label{GoodRunDefn}
A run of the serial algorithm on $\Crown G_\circ$ is said to be \define{good} if it satisfies the following properties:
\begin{itemize}
  \item The initial matching is regal.
  \item Every macrovertex formation in which a monde vertex participates is directly followed by an augment to another monde vertex and an expand operation on the macrovertex.
  \item Trees are \define{directed downwards along the crown}: No circlet vertex is permitted to graft an arch vertex as a child.
\end{itemize}
\end{definition}

\noindent
Our intent behind \Cref{RegalDefn} and \Cref{GoodRunDefn} is to limit the roles which the new monde and arch vertices can play in the serial algorithm, facilitating a comparison between the distributed algorithm acting directly on $G$ and the serial algorithm's behavior as understood through the circlet.

\begin{example}
In \Cref{CrownExampleFigureWithTree}, we then examine an example on the same $\Crown G_\circ$ of an alternating tree directed along the crown.
The two circlet vertices without circlet matches engender two unmatched crown vertices, at which the tree roots.
Restricted to the circlet (i.e., from the viewpoint of the distributed blossom algorithm), this tree has found an augmenting path between two unmatched circlet nodes, terminating with the foremost edge.
In the larger crown, this alternating tree is ready to contract into a macrovertex, at which point it can augment with the other unmatched monde vertex, then re-expand.
\end{example}

\begin{lemma}
Every run of the distributed algorithm admits a rooted Lamport ordering.
\end{lemma}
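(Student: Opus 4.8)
The plan is to begin with an arbitrary Lamport ordering~\cite{Lamport} of the run (which we take to be terminating, this being the relevant case) and to massage it into a rooted one by \emph{deferring} grafts that do not pertain to the tree resolved at the end of their current segment (\Cref{RootedOrderingDefn}). The engine of the argument is a commutativity principle: by the supervisor locking discipline of Section~3.3, built on the recursive locks of~\cite{PetersonKaralekasAetherPaper}, every committed operation acquires locks exactly on the blossoms of the tree (or trees) it manipulates, and two committed operations are comparable in the happens-before order only if their lock sets overlap. Consequently a \texttt{graft} $g$ building the tree rooted at $r$, and a state update $S$ acting on a tree disjoint from $r$'s, have disjoint lock sets, are therefore concurrent, and may be transposed in any linear extension of the happens-before order without disturbing either the causal order or the computed state.

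First I would make precise the target of each graft. In a terminating run every alternating tree is ultimately dissolved by a state update acting on its root (at the latest, the \texttt{augment} that matches the root), and every graft onto that tree causally precedes this resolving update. Hence for a graft $g$ onto the tree rooted at $r$ there is a well-defined earliest state update $U(g)$ after $g$ that acts on $r$, and all state updates strictly between $g$ and $U(g)$ act on other trees. The key observation is that if $g$ were causally below such an intervening update $S$, their lock sets would overlap, forcing $S$ to touch $r$'s tree and hence to act on $r$---contradicting that $S$ acts on another tree. Thus $g$ is concurrent with every intervening update, and none of those updates can block a transposition.

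Next I would run the deferral procedure: while some segment fails to be rooted, pick a misplaced graft $g$ in it (one whose root $r$ is absent from the terminal update's roots) and commute $g$ rightward past the intervening other-tree updates until it lands immediately before $U(g)$. After this move the first state update following $g$ is $U(g)$, which acts on $r$, so $g$ is no longer misplaced. Each such move strictly decreases a nonnegative integer monovariant---the total number, summed over all grafts $g$, of state updates that precede $g$ yet do not act on $g$'s root---so the process terminates. Since every intermediate step transposes only adjacent concurrent events, each remains a linear extension of happens-before and hence a valid Lamport ordering; grafts onto a common tree are causally chained and so keep their relative order automatically, and no graft is ever moved across an operation on its own tree.

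The main obstacle is the commutativity principle itself. Its subtlety is that scans broadcast pings across tree boundaries, so one must argue that these read-only probes impose no happens-before constraint among \emph{committed} operations beyond those already recorded by the locks: a state update re-validates its sponsoring action under lock in step~3 of the supervisor outline, so its commitment depends only on state it has locked, and a concurrent graft on a disjoint tree cannot have influenced it. Pinning this down---equivalently, justifying that lock-set overlap exactly generates the happens-before order on committed operations---is where the real work lies, after which the order-theoretic manipulation and the termination monovariant are routine.
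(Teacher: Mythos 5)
Your proposal takes essentially the same route as the paper's proof: both fix an arbitrary Lamport ordering and repair it by commuting misplaced grafts rightward past state updates whose acting roots are disjoint from the graft's root, iterating until every segment is rooted---the paper merely performs this deferral one segment at a time (moving the last violating graft of the earliest non-rooted segment just past that segment's terminal update) and leaves the lock-based concurrency justification implicit, whereas you push each graft toward $U(g)$ in one sweep and spell out the happens-before reasoning. One slip to fix: your termination monovariant is stated backwards, since commuting $g$ rightward \emph{increases} the number of state updates that precede $g$ without acting on its root; the quantity that strictly decreases is the number of state updates lying strictly between $g$ and $U(g)$, summed over all grafts, which vanishes precisely when the ordering is rooted (and note that when same-tree grafts intervene, $g$ must stop just before the next operation on its own tree rather than ``immediately before $U(g)$,'' as your own final caveat already requires).
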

\begin{proof}
Fixing any Lamport ordering of a run, we inductively ``smooth'' it to produce a rooted ordering.
Consider the earliest segment within the run which is not rooted, and consider the last grafting operation within that segment which violates the rooted property.
Because the root which is enacting this graft is disjoint from all of the operations' roots from this point to the state update which terminates the segment, this graft operation can be commuted to occur just after the state update.
Continuing in this way produces the desired rooted ordering.
\end{proof}

\noindent
Next, we show that the \textit{good run} condition is not limiting: any partial good run which still admits progress as a run also admits progress as a good run.

\begin{lemma}
Every partial good run of the serial algorithm operating on $\Crown G_\circ$ can be extended to a longer good run.
\end{lemma}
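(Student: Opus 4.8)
The plan is to induct on the length of the partial good run, maintaining throughout a \emph{regal invariant}: at every state reachable by a good run, all circlet and arch vertices are matched, so that the only unmatched vertices are monde vertices (and hence every tree root is an unmatched monde vertex). First I would verify this invariant is preserved by each serial operation available on $\Crown G_\circ$—which, since the crown is unweighted, are only graft, augment, contract, and expand: grafting does not touch the matching; augmenting leaves the interior of the augmenting path matched and, by the invariant, has an unmatched monde vertex as its far endpoint; and contraction and expansion only shuffle matches through the macrovertex. The base case is the initial regal matching supplied by \Cref{GoodRunDefn}.

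Given the invariant, I would reduce the lemma to the statement that if the current state admits any serial step then it admits a \emph{good} serial step. Since ``admits progress as a run'' is equivalent to the matching not yet being perfect, and since $\Crown G_\circ$ has a perfect matching (match $G_\circ$ within the circlet, then match the arches to the monde), the underlying blossom algorithm can always grow the active tree toward an augmenting path. I would then case on the current situation: if no tree is active, root one at an unmatched monde vertex; if an augment or a blossom contraction is available, perform it; otherwise perform a graft. In each case I must exhibit a good representative of the step.

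The main obstacle is the ``directed downwards'' clause of \Cref{GoodRunDefn}, which forbids a circlet vertex from grafting an arch vertex as a child; I must show this never blocks progress. The key observation is that the only arch adjacent to a circlet vertex $c$ is its own avatar $a_c$, and a forbidden graft $c \to a_c$ can occur only when $c$ is matched to another circlet vertex, forcing $a_c$ to be matched to a monde vertex $m'$. In that situation the unmatched monde root $m_1$—being unmatched, and adjacent to every arch—can instead graft $a_c$ directly, bringing $a_c$ in as a negative vertex and $m'$ in as a positive vertex, exactly as the forbidden graft would have. Replacing each forbidden graft in this way shows, by induction on the unrestricted tree, that good tree growth reaches the same partition of vertices into positive and negative as unrestricted growth; consequently every augment and every contraction available to the unrestricted algorithm remains available to the good one, so no progress is lost.

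Finally I would dispatch the ``monde-blossom'' clause, which requires any macrovertex formation involving a monde vertex to be followed immediately by an augment to another monde vertex and an expansion. Whenever such a blossom forms, its defining odd alternating cycle passes through a monde vertex and hence contains at least one arch vertex adjacent to it; since that arch is adjacent to every monde vertex and, by the regal invariant together with non-perfectness, some monde vertex is unmatched, the required augment is available, and the expansion then follows. Thus forming a monde-blossom is always completable into a good, progress-making step, and the case analysis closes: every partial good run that can be extended at all can be extended while remaining good.
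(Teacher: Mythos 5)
Your overall strategy matches the paper's in identifying the two clauses of goodness that could block progress (directedness and the monde-blossom follow-up), but your treatment of directedness is genuinely different. The paper performs surgery on an offending \emph{augmenting path}: using regalness, it reroutes any path violating directedness into an alternating cycle reachable by a crown-directed tree, which is then disposed of by the contract/augment/expand mechanism of the first clause. You instead intercept the offending operation at \emph{graft} time: since a circlet vertex $c$ is adjacent to only one arch $a_c$, a forbidden graft forces $a_c$ to be matched to a monde vertex, and the (monde) root can graft $a_c$ directly, producing the same two new tree vertices with the same signs. This is a clean, more local argument, and the facts it rests on (roots are unmatched monde vertices by the regal invariant, which you rightly make explicit where the paper leaves it implicit; monde--arch adjacency is complete) are correct. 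One caution: your stronger claim that restricted growth reaches ``the same partition of vertices into positive and negative'' as unrestricted growth does not survive contractions---once a blossom forms, the two trees contract \emph{different} cycles, since the minimal alternating cycle depends on the tree structure your replacement has altered---but the lemma only needs the weaker, state-by-state statement that every available step has a good counterpart available from the same state, which your replacement does deliver.

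Two concrete flaws are worth naming. First, your claim that $\Crown G_\circ$ has a perfect matching (``match $G_\circ$ within the circlet, then match the arches to the monde'') is false in general: monde vertices are adjacent only to arches, so any perfect matching of $\Crown G_\circ$ must match all $n$ arches into the monde, forcing the circlet to be perfectly matched within itself---that is, $\Crown G_\circ$ has a perfect matching if and only if $G_\circ$ does, and $G_\circ$, being a weightless subgraph midway through the weighted algorithm, typically does not. Fortunately this claim, and the equivalence ``admits progress iff not perfect'' that you hang on it, are inessential: your case analysis on available steps never actually uses them. Second, in the monde-blossom paragraph, ``some monde vertex is unmatched'' is not enough: the augment target must lie \emph{outside} the tree, and the unmatched monde vertex that non-perfectness produces may be the tree's own root. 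You need a second unmatched monde vertex, which follows from parity ($n$ is even, so the crown has an even number of unmatched vertices, and the root already witnesses one)---a point the paper's proof also leaves implicit, but which your phrasing actively obscures by deriving existence from non-perfectness alone.
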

\begin{proof}
We have two objectives:
\begin{enumerate}
  \item We must show that a contraction which includes a monde vertex can be followed by a monde-monde augmentation and an expansion.
  \item We must show that an augmenting path which does not respect the directedness property can be abbreviated to one that does.
\end{enumerate}

For the first objective, consider such a macrovertex and the cycle used to produce it.
The contracted cycle includes an arch vertex, hence the macrovertex inherits all of its edges to the vertices in the monde.
Any edge to an unmatched monde vertex can be used to perform the augmentation.
The macrovertex can then be directly expanded.

For the second objective, consider an augmenting path which violates directedness.
First, we may assume that the augmenting path visits the monde only at its endpoints as follows.
By regalness, the augmenting path must begin and end in the monde, so that any intermediate visit to the monde is through a matched vertex.
As the augmenting path begins and ends at positive vertices, there must be two adjacent monde vertices whose intervening path begins and ends with an unmatched edge.
Select that subpath, then replace its terminating edges with visits to the original terminating monde vertices.
Second, we may even assume that the terminating monde vertices are the same, since any arch vertex is reachable by any monde vertex.
Together, these observations produce an alternating cycle accessible by a crown-directed tree.
\end{proof}

\noindent
With these Lemmas in hand, we turn to the proof of the main correspondence between serial and distributed runs, stated as \Cref{UnweightedOrderBijection}.

\begin{proof}[{Proof of \Cref{UnweightedOrderBijection}}]
Given a rooted Lamport ordering of a run of the distributed blossom algorithm, select a segment for which we will construct a corresponding segment of steps in the serial algorithm.
Select an unmatched vertex from the monde to serve as the root of the serial segment.
For each root participating in the distributed segment, graft the corresponding circlet-arches matched edge to the chosen root in the monde.
Then, each non-augment maneuver in the distributed segment corresponds to an identical maneuver in the serial segment.
Finally, an augment in the distributed segment corresponds on the serial side to contraction, augmentation, and expansion.
Since we have assumed an interest only in good runs, this correspondence can also be applied in reverse.
\end{proof}

\begin{corollary}
Ignoring internal weights, the distributed algorithm can be used to produce the same $\{M_j \subseteq G_j\}_j$ data as in Edmonds's original proof~\cite[Section 7]{EdmondsMaximumMatching01}.
\qed
\end{corollary}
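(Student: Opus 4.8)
The plan is to treat this corollary as a translation of Edmonds's own read-off procedure through the crown, chaining together the bijection of \Cref{UnweightedOrderBijection} with the two preceding Lemmas. First I would start from an arbitrary run of the distributed algorithm and invoke the first Lemma to replace it with a rooted Lamport ordering, which leaves the emitted matching unchanged. Then, applying \Cref{UnweightedOrderBijection}, this rooted ordering corresponds to a good run of the serial algorithm on $\Crown G_\circ$; the second Lemma guarantees that such a partial good run extends all the way to termination, so that we have a genuine terminal serial run to which Edmonds's Section~7 analysis applies verbatim.

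Next I would feed that good serial run into the read-off procedure described for the pre-termination loop of \Cref{SerialBlossomAlgorithm}, which extracts the witnessing sequence $\{M_j \subseteq G_j\}_j$ directly from the contraction steps. Because the bijection preserves the sequence of state updates, the contract, augment, and expand steps on the serial side are in explicit correspondence with the distributed side, with the one subtlety that each distributed augment is realized serially as a contract–augment–expand triple routed through a monde vertex. The crux is therefore to show that these auxiliary triples are \emph{transient}: the \textit{good run} condition forces every macrovertex formation involving a monde vertex to be immediately followed by a monde–monde augment and an expansion, so such a macrovertex never persists in the sequence $\{G_j\}_j$. Consequently, projecting the sequence onto the circlet (discarding the arch and monde avatars) leaves exactly the macrovertices the distributed algorithm forms on $G_\circ$, and the surviving matchings restrict to the distributed matchings on $G_\circ$.

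The main obstacle I anticipate is the bookkeeping in this projection step: one must verify that discarding the transient monde/arch contractions and restricting to the circlet still yields a sequence meeting all of Edmonds's structural requirements—each $G_j$ obtained from the next by a single alternating-cycle contraction, the contraction edges intact, and the final $G_n$ accounting for all matched edges. Once the transient triples are recognized as not contributing to the persistent contraction data, this reduces to confirming that the remaining steps line up one-to-one with the distributed contraction and augmentation events supplied by \Cref{UnweightedOrderBijection}. Finally, I would note that the qualifier ``ignoring internal weights'' is precisely what licenses stopping here: we are only transporting the \emph{primal} combinatorial data $\{M_j \subseteq G_j\}_j$, so no claim about admissibility of the dual weights is required, and the corollary then follows immediately from the correspondence.
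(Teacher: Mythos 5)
Your proposal is correct and follows essentially the same route the paper intends: the corollary is stated with no separate proof precisely because it is the transport step (item 4 of the paper's outline), chaining the rooted-ordering Lemma, the good-run extension Lemma, and \Cref{UnweightedOrderBijection} to carry Edmonds's witnessing sequence from the good serial run on $\Crown G_\circ$ back to the distributed run on $G_\circ$. Your added bookkeeping about the transient monde contract--augment--expand triples and the projection to the circlet is exactly the content the paper leaves implicit in the \texttt{good run} condition, so it elaborates rather than deviates from the paper's argument.
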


\subsubsection{Dual correctness}

Unfortunately, the dual step must be treated more manually: the multitude of monde vertices prevent us from fruitfully applying the weighted algorithm to $\Crown{G}$.
Fortunately, there is altogether less to show.
Instead, let us examine the the conditions for performing a reweight in the distributed setting.
\begin{description}
  \item[Positive--positive:]
  A tree for which no more primal updates are possible has no positive vertices connected by weightless edges to other positive vertices, whether in this tree or in another.
  Hence, this case is empty.
  This is as in the dual update step of the serial weighted algorithm.
  \item[Negative--any:]
  No edge emanating from a negative vertex in this tree affects the reweight calculation.
  This is also as in the dual update step of the serial weighted algorithm.
  \item[Positive--negative internal:]
  An edge emanating from a positive vertex to a negative vertex in the same tree is not included in the reweight amount calculation.
  This is also as in the dual update step of the serial weighted algorithm.
  \item[Positive--negative foreign:]
  An edge between a positive vertex in this tree and a negative vertex in another constrains the reweight amount.
  \textbf{Since the serial algorithm uses a tree rather than a forest, it has no concept of a ``foreign'' target and hence has no analogue of this case.}
\end{description}

\noindent
Our observation is that this last constraint is always a phantom: even when it is weightless, (non-local) progress can always be made.
There are two subcases to this final point:
according to whether the tree in question belongs to a hold cluster.
\begin{description}
  \item[Tree belongs to a hold cluster:]
  The multireweighting procedure circumnavigates the weightless hold constraints.
  In fact, it is guaranteed to reweight by a (nonzero) amount giving a minimal enlargement of the weightless subgraph.
  \item[Tree does not belong to a hold cluster:]
  The tree responsible for breaking the hold cluster is necessarily free to act in some other way, whether by a primal update or by a (multi)reweight of its own.
\end{description}
In Edmonds's language, both subcases (hence all cases) afford algorithmic progress.

Since the correctness of the distributed blossom algorithm comes down only to the combinatorial properties of the primal step, together with the ability of the second step to make progress, we have therefore proven the following theorem:

\begin{theorem}\label{DistributedCorrectness}
The distributed blossom algorithm is correct: it always terminates, and upon termination it emits a minimum-weight perfect matching.
\qed
\end{theorem}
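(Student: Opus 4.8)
The plan is to assemble the final theorem from two independently-established pillars, mirroring the architecture of Edmonds's own correctness proof: a \emph{primal} half guaranteeing that the combinatorial data emitted is of the witnessing type Edmonds recognizes, and a \emph{dual} half guaranteeing that the accompanying internal weights are admissible and that the algorithm never stalls. Concretely, I would first invoke Edmonds's recognition principle: it suffices to exhibit, at termination, a sequence $\{M_j \subseteq G_j\}_j$ of contracted graphs and matchings with admissible internal weights, weightless contraction edges, all matched edges of $G_n$ weightless, and all unmatched vertices of $G_n$ carrying zero internal weight. If I can read such a sequence off the terminal state, minimality and perfection follow for free.

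For the primal half I would lean entirely on \Cref{UnweightedOrderBijection} and its corollary. Fixing a run of the distributed algorithm, the first Lemma lets me assume its Lamport ordering is rooted; the bijection then identifies this rooted ordering with a good run of the serial algorithm on the crown $\Crown G_\circ$, preserving the sequence of state updates. Since the serial algorithm on $\Crown G_\circ$ is correct (\Cref{ClassicalCorrect}) and produces Edmonds's witnessing sequence, the corollary transports that same sequence---ignoring internal weights---back to the distributed run on $G_\circ$. This pins down the entire combinatorial skeleton $\{M_j \subseteq G_j\}_j$; what remains is purely a matter of weights and of showing the process actually halts.

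For the dual half I would argue progress directly from the four-case analysis of the reweight constraints. Three of the four cases (positive--positive, negative--any, and positive--negative internal) coincide verbatim with the serial dual update, so the only genuinely new phenomenon is the positive--negative \emph{foreign} constraint. Here I would show this constraint is always a phantom: if the offending tree lies in a hold cluster, the \texttt{multireweight} of \Cref{MultireweightSection} reweights by a strictly positive amount that minimally enlarges the weightless subgraph; and if it does not, some tree witnessing the failure of the cluster is itself free to perform a primal update or an ordinary (multi)reweight. Either way a strictly progress-making action is available whenever the matching is not yet perfect. Because each such action advances the bounded-length witnessing sequence (grafts and contractions are finite, augments strictly grow the matching, and each reweight strictly enlarges the weightless subgraph), and because the priority scheme and lock/abort discipline governing the supervisor operations preclude deadlock and livelock, the algorithm terminates.

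Combining the two halves finishes the proof. At termination every vertex is matched, so the matching is perfect; the reweight operation has maintained weightlessness of every contraction edge throughout (\Cref{ReweightOperation}), the internal weights stay admissible---with nonnegativity of adjusted edge weights supplied by \Cref{AdjEdgeWeightsAreNonneg}---and the stopping condition forces all matched edges weightless with every unmatched vertex at zero internal weight. These are exactly Edmonds's terminal conditions, now decorating the combinatorial skeleton furnished by the primal half, so the recognition principle certifies minimality. The main obstacle, and the one place the distributed setting demands genuinely new work rather than a translation of Edmonds, is the foreign-constraint progress argument: I must be certain that the asynchronous interaction of several trees can never conspire to block all of them simultaneously, which is precisely what the hold-cluster/\texttt{multireweight} mechanism is engineered to rule out.
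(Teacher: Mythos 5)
Your proposal is correct and follows essentially the same route as the paper: primal correctness via rooted Lamport orderings and the crown bijection of \Cref{UnweightedOrderBijection}, dual correctness via the four-case reweight analysis with the positive--negative foreign constraint dismissed through the hold-cluster/\texttt{multireweight} dichotomy, and the two halves combined through Edmonds's recognition principle. Your added explicitness about termination (bounded witnessing sequence, strictly growing matching, livelock exclusion via priorities) is a reasonable elaboration of what the paper leaves implicit in its appeal to ``algorithmic progress,'' not a different argument.
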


\subsection{Timing}

\begin{theorem}\label{TimingTheorem}
The runtime of the distributed blossom algorithm is $O(n^4)$, with $n$ the number of vertices in $G$.
\end{theorem}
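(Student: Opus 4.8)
The plan is to bound the \emph{total work} of the algorithm, measured as the aggregate number of messages exchanged, by separately bounding (i) the number of committed tree operations and (ii) the cost of enacting each one, and then arguing that the distributed-only overhead (failed lock acquisitions, rewinds, and multireweights) inflates this product by at most a constant factor that is absorbed into $O(n^4)$.

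First I would count the committed operations by leveraging the correspondence already established. By \Cref{UnweightedOrderBijection} together with the dual-correctness discussion, a rooted run of the distributed algorithm on $G_\circ$ tracks a good run of the serial algorithm on $\Crown G_\circ$, a graph on $3n = O(n)$ vertices, and the committed state updates match one-for-one (an augment on the distributed side corresponding to a contract--augment--expand triple on the serial side, and every other update corresponding identically). It therefore suffices to count state updates in the serial run, for which the standard accounting for Edmonds's algorithm applies: there are $O(n)$ augmentations, and between consecutive augmentations a single tree is grown, so the phase admits $O(n)$ grafts, $O(n)$ contractions and expansions, and $O(n)$ reweights (each dual update enabling at least one subsequent primal step). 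This yields $O(n^2)$ committed operations in total.

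Next I would bound the cost of enacting one operation. Every operation is triggered by a scan (\Cref{ScanAlgorithm}), which propagates along the at-most-$n$ blossoms of a tree and, at each positive vertex, issues a \texttt{message-ping} to each of its $O(n)$ neighbors in the complete graph; hence a scan costs $O(n^2)$ messages. The remaining phases of the supervisor recipe---recursive locking, the validity re-scan, and the critical section (an augment chain, a cycle contraction, or a macrovertex expansion)---each touch only $O(n)$ blossoms and are therefore dominated by the $O(n^2)$ scan. Multiplying the $O(n^2)$ committed operations by the $O(n^2)$ cost of each gives the claimed $O(n^4)$ bound, provided the number of attempted-but-aborted operations stays within a constant factor of the number of committed ones.

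The hard part is exactly this last proviso, since it is the genuinely distributed phenomenon with no serial counterpart: the algorithm may lock, validate, and then abort or rewind without committing, and it introduces multireweights entirely absent from the serial run. To control aborts arising from reweight conflicts I would appeal to the priority-based symmetry-breaking scheme of \Cref{MultireweightSection}: whenever two trees rewind a competing reweight, the higher-priority tree retains its change and thereby commits a productive dual update, so each rewind can be charged to a distinct committed operation of a superior tree, keeping the number of rewinds $O(n^2)$. Multireweights I would bound in the same manner as committed reweights, since each strictly enlarges the weightless subgraph and hence precedes a fresh primal step, while a multireweight's soft scans range over a hold cluster of total size $O(n)$ and so still cost $O(n^2)$ messages apiece. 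Finally, aborts from failed lock acquisition or stale-root checks can be amortized against the operation that invalidated them, so the total attempt count remains $O(n^2)$ and the overall message bound stays $O(n^4)$.
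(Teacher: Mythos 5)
Your proposal follows essentially the same route as the paper: its proof likewise reduces to the serial $O(n^4)$ bound via \Cref{UnweightedOrderBijection} (noting that $\Crown{G}$ is only a constant factor larger than $G$) and invokes the priority mechanism to assert that the distributed algorithm, like the serial one, never backtracks, so aborted or rewound work does not affect the asymptotics. Your more explicit accounting---$O(n^2)$ committed operations at $O(n^2)$ messages each, with rewinds and multireweights amortized against higher-priority commits---fills in details that the paper's three-sentence argument leaves implicit, but the decomposition and the key ingredients are the same.
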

\begin{proof}
The serial algorithm does not backtrack, and using a priority mechanism the distributed algorithm \emph{also} does not backtrack.
Hence, the runtime of the distributed algorithm is bounded by the duration of the serial run to which it corresponds under \Cref{UnweightedOrderBijection}.
Since the original algorithm runs in time $O(n^4)$ and $\Crown{G}$ is not asymptotically larger than $G$, we learn the same for the distributed algorithm.
\end{proof}

\begin{remark}
In particular, there is never an asymptotic penalty to using this algorithm over the (original) serial version.
Of course, modern implementations of the serial algorithm reduce the asymptotic runtime substantially compared to Edmonds's original.
\end{remark}

\section{Closing comments}

We point out some of the stones we have left unturned.

\begin{remark}[Structured data improvements to runtime]
The worst case time complexity bound given in \Cref{TimingTheorem} is likely to be achievable for the algorithm presented here: when working with a fully-connected graph, reweighting operations rooted at high-priority nodes have a particularly bothersome ability to prevent lower-priority reweighting operations from succeeding.
However, improvements to the serial blossom algorithm have appeared since its invention in the 1960s~\cite{CookRohe,Kolmogorov,MicaliVazirani} which more intelligently deploy data structures to track which proposals are worth querying, lowering the time complexity bound.
It is open whether these serial blossom variants can be imported to the distributed setting, where the maintenance of delicate large-scale structure has the potential to destroy locality.
\end{remark}

\begin{remark}[Geometric improvements to runtime]\label{GeometricImprovements}
Our intended application rests heavily on the observation that geometric structure in the problem graphs can be leveraged in the dryad to give dramatic improvements in runtime~\cite{PetersonKaralekasArtemisPaper}.
A study of this phenomenon is well worth pursuing, not least because of the further application to quantum error correction.
\end{remark}

\begin{remark}[Lower bounds on runtime]
It is also of interest to find classes of problem graphs which cannot be quickly matched, thereby putting lower bounds on the runtime of the distributed algorithm.
For instance, it is a well-known result that $2$--coloring a line takes $o(n)$ time in the size of the line~\cite{Linial}, which is a special case of a perfect matching: given an enumeration of the vertices in the line (with no regard for their ordering) and a perfect matching, one can produce a $2$--coloring by partitioning vertices into those whose matches are higher- or lower-valued than they are.
Producing a family of examples of this form, as well as understanding their interactions with \Cref{GeometricImprovements}, would be very valuable.
\end{remark}

\begin{remark}[Online variants]
In our intended application, we implement a dryad that supports, to a limited but extremely useful extent, injection and ejection of nodes from an actively running algorithm.
It is of interest to understand how far this can be pushed and what applications this unlocks.
\end{remark}

\section*{Acknowledgements}

We would like to thank both Charles Hadfield and Austin Fowler for indirectly suggesting this problem: we would not have thought to look in this direction without Charles, nor would we have known where to begin without Fowler's groundwork.
We would also like to thank Zac Cranko for pointing out that it was possible to make a time complexity argument.
Lastly, the first author would like to thank Samrita Dhindsa for all manner of support as this work was carried out.

\bibliographystyle{alpha}
\bibliography{main}

\end{document}